\newtheorem{proposition}{Proposition}
\title{A Data-driven Convergence Booster for Accelerating and Stabilizing Pseudo Time-stepping}
\author[1]{Xukun Wang}
\author[2]{Yilang Liu}
\author[3]{Xiang Yang}
\author[2]{Weiwei Zhang \thanks{Corresponding author: aeroelastic@nwpu.edu.cn}}
\affil[1]{School of Aeronautics, Universidad Politécnica de Madrid, Madrid, 28040, Spain}
\affil[2]{National Key Laboratory Science and Technology on Aerodynamic Design and Research, School of Aeronautics, Northwestern Polytechnical University, Xi'an, 710072, China}
\affil[3]{Department of Mechanical Engineering, Pennsylvania State University, University Park, 16802, PA, USA}
\begin{document}

\maketitle
\begin{abstract}
This paper introduces a novel data-driven convergence booster that not only accelerates convergence but also stabilizes solutions in cases where obtaining a steady-state solution is otherwise challenging. The method constructs a reduced-order model (ROM) of the solution residual using intermediate solutions and periodically solves a least-square problem in the low-dimensional ROM subspace. The second-order approximation of the residual and the use of normal equation distinguish this work from similar approaches in the literature from the methodology perspective. From the application perspective, in contrast to prior studies that focus on linear systems or idealized problems, we rigorously assess the method’s performance on realistic computational fluid dynamics (CFD) applications. In addition to reducing the time complexity of point-iterative solvers for linear systems, we demonstrate substantial reductions in the number of pseudo-time steps required for implicit schemes solving the nonlinear Navier–Stokes equations. Across a range of two- and three-dimensional flows—including subsonic inviscid and transonic turbulent cases—the method consistently achieves a 3 to 4 times speedup in wall-clock time.
Lastly, the proposed method acts as a robust stabilizer, capable of converging to steady solutions in flows that would otherwise exhibit persistent unsteadiness—such as vortex shedding or transonic buffet—without relying on symmetry boundary conditions.
\end{abstract}


\section{Introduction}

Computational Fluid Dynamics (CFD) is an essential tool for simulating fluid flow in engineering applications, including aviation, naval vessels, automobiles, and wind turbines, as well as in scientific research on turbulence, shock waves, and biofluid dynamics~\cite{mani2023perspective}. However, large-scale, high-resolution simulations pose significant challenges: achieving higher accuracy requires finer resolutions, while the associated computational costs remain prohibitive~\cite{yang2021grid,choi2012grid}. To address this, acceleration methods play a crucial role in CFD by reducing computational expense, expediting convergence, and, in some cases, enhancing numerical stability. These methods can generally be categorized into classical numerical techniques and emerging data-driven approaches.

Classical acceleration methods refer to canonical numerical techniques that form the foundation of CFD solvers. One widely used approach is local time-stepping~\cite{RN153}, which adjusts the time step based on local flow conditions to accelerate convergence to solutions. However, this technique is constrained by the Courant–Friedrichs–Lewy (CFL) condition~\cite{5391985}, which imposes a limit on the maximum allowable time step. To overcome this limitation, implicit time-marching schemes have been developed, often combined with iterative solvers such as the Lower Upper Symmetric Gauss-Seidel (LU-SGS) method~\cite{RN10} and the Generalized Minimal Residual (GMRES) method~\cite{RN261,RN593} to further enhance convergence. Additionally, residual smoothing techniques~\cite{RN594} mitigate high-frequency errors by averaging residuals across adjacent grid elements.
Another fundamental acceleration strategy is the multigrid method, originally developed for elliptic equations~\cite{RN627} and later extensively applied in CFD simulations~\cite{RN163}. This method alternates between smoothing high-frequency errors using time-marching schemes and eliminating low-frequency errors via coarse-grid corrections. Beyond $h$-multigrid, which solves governing equations on multiple mesh resolutions, $p$-multigrid methods~\cite{RN1112,RN1050} employ different polynomial orders instead of grid coarsening, achieving significant acceleration without modifying the mesh. Other classical acceleration techniques include preconditioning strategies~\cite{RN991} and enthalpy damping~\cite{RN1180}, both of which improve numerical efficiency and stability.

Despite their success, classical methods struggle with highly complex or nonlinear problems. Furthermore, large-scale, high-resolution CFD simulations remain computationally expensive even when multiple acceleration techniques are applied simultaneously, motivating continued research on convergence boosters. The advent of machine learning (ML), artificial intelligence (AI), and big data has led to the emergence of new, data-driven approaches in fluid dynamics~\cite{RN75}, leveraging intermediate simulation data and ML techniques to improve convergence and efficiency.

The use of data for accelerating numerical solvers dates back to classical vector extrapolation methods (VEM), such as minimal polynomial extrapolation (MPE) and reduced rank extrapolation (RRE)~\cite{MPE,RN264,MESINA1977165}. These methods exploit the exponential decay of numerical errors, approximating the convergent solution as a linear combination of previous snapshots. Their theoretical foundations and practical implementations have been extensively reviewed in the literature~\cite{RN128,RN170,RN139} and is not repeated here for brevity. In the 1980s, VEM techniques were integrated into CFD solvers to accelerate convergence~\cite{doi:10.2514/6.1987-1143,doi:10.2514/6.1990-338,RN137}. 
More recent developments build on these concepts but leverage reduced-order models (ROMs), e.g.,  proper orthogonal decomposition (POD)~\cite{RN600} and dynamic mode decomposition (DMD)~\cite{RN5}, that are originally developed for flow analysis and dimension reduction. Building on this idea, Liu et al.~\cite{RN1} proposed the mode multigrid method (MMG), which employs DMD to accelerate convergence by filtering high-order error modes. This approach has since been extended to solving turbulent flows~\cite{RN40} and adjoint solvers~\cite{RN602}. More recently, Bin et al.~\cite{BIN2025113859} projected intermediate solutions onto a low-dimensional Hilbert subspace and directly solved the discretized system, achieving reduction of the time complexity of baseline point iterative methods.

Another class of data-driven acceleration methods leverages advances in ML techniques, where data are used for offline training. Deep neural networks (DNNs), for example, have demonstrated significant potential in accelerating iterative solvers, improving initial guesses, and replacing components of the iterative process~\cite{RN1343,RN66}. Reinforcement learning (RL) has also been employed to dynamically optimize solver parameters, enhancing convergence efficiency~\cite{RN1001}. More advanced neural network frameworks, such as graph convolutional networks~\cite{RN1274} and probabilistic generative diffusion models~\cite{NING2025109917}, were also reported to accelerate convergence. A comprehensive review of these ML-based approaches is beyond the scope of this paper, but recent surveys provide in-depth discussions on their implementation and impact on CFD acceleration~\cite{RN72}.

Among acceleration techniques, some not only enhance computational speed but also address convergence and numerical stability issues. These methods, termed \emph{convergence boosters}, serve a dual role: accelerating the solver when iterations are convergent and stabilizing it when convergence is slow or when divergence occurs. For example, the BoostConv method proposed by Citro et al.~\cite{RN118} recombines residuals from previous steps to either accelerate or stabilize iterations. Similarly, Cao et al.~\cite{RN640} introduced an optimization-enhanced ROM framework that improves steady-state solver convergence. More recently, Wang et al.~\cite{RN489} extended the MMG method to improve both convergence and stability in CFD solvers, demonstrating its effectiveness as a convergence booster.

In this study, we propose a data-driven convergence booster, named Mean-based Minimal Residual (MMRES), which enhances both acceleration and stability in iterative solvers. MMRES constructs a mean-based ROM from solution snapshots and computes the optimal solution within this low-dimensional subspace by minimizing the residual norm. The method is rigorously analyzed in relation to existing acceleration techniques, including RRE, Anderson acceleration, and quasi-Newton iteration methods. Furthermore, we demonstrate that MMRES significantly reduces the time complexity of baseline methods, such as the Jacobi iteration, from $O(n^2)$ to $O(n)$ when solving linear problems, making it particularly advantageous for large-scale simulations. Additionally, MMRES effectively accelerates and stabilizes implicit pseudo-time marching schemes across a wide range of cases, showcasing its potential for tackling complex nonlinear problems. Importantly, our goal is to enhance existing CFD solvers through MMRES rather than replace them entirely.

The remainder of this paper is organized as follows. Section~\ref{sec:methodology} presents the methodology, detailing the iterative formulation and the proposed MMRES approach. Section~\ref{sec:complexity} analyzes the computational complexity of MMRES and verifies its performance in linear test cases. Section~\ref{sec:cfd} demonstrates its effectiveness in various fluid dynamics problems, highlighting both acceleration and stabilization benefits. Finally, Section~\ref{sec:conclusion} summarizes the findings and outlines potential future research directions.

\section{Methodology}
\label{sec:methodology}

Consider a discretized nonlinear problem:
\begin{equation}
	\label{eq:nonlinearsystem}
	\mathcal{N}(x) = 0.
\end{equation}
A general iterative update can be written as:
\begin{equation}
	\label{eq:generaliteration}
	x_{k+1} = x_k + B_k r_k,
\end{equation}
where $x_k \in \mathbb{R}^n$ is the solution vector, $r_k \in \mathbb{R}^n$ is the residual, and $B_k \in \mathbb{R}^{n \times n}$ depends on the iteration scheme. Without loss of generality, the number of degrees of freedom (DOFs) is given by $n = n_i^d$, where $n_i$ is the number of DOFs per dimension and $d$ is the spatial dimension. In a linear system, $B_k$ remains constant, whereas in most flow problems, it depends on the pseudo-time marching scheme.

The core idea of the proposed method is to accelerate convergence by periodically solving the problem in a reduced-order subspace and using the obtained solution as an initial guess for subsequent iterations. To achieve this, we construct a reduced-order model (ROM) based on intermediate solution snapshots. Given $m$ stored solution snapshots $\{x_1, x_2, \dots, x_m\}$, we define a mean-based ROM:
\begin{equation}
	\label{eq:ROM}
	\tilde{x} = \bar{x} + \Phi\xi,
\end{equation}
where the mean solution is:
\begin{equation}
	\bar{x} = \frac{1}{m} \sum_{i = 1}^{m} x_i.
\end{equation}
The basis matrix $\Phi \in \mathbb{R}^{n \times m}$ is constructed from the deviations of snapshots from their mean:
\begin{equation}
	\label{eq:Phi}
	\Phi =
	\begin{bmatrix}
		| & | &  & | \\
		\phi_1 & \phi_2 & \cdots & \phi_m \\
		| & | &  & | \\
	\end{bmatrix}
	=
	\begin{bmatrix}
		| & | &  & | \\
		x_1-\bar{x} & x_2-\bar{x} & \cdots & x_m-\bar{x} \\
		| & | &  & | \\
	\end{bmatrix}.
\end{equation}
In Eq.~\eqref{eq:ROM}, $\xi \in \mathbb{R}^m$ is a coefficient vector that determines the optimal correction within the reduced-order space.

The objective is to find $\xi^*$ that minimizes the residual:
\begin{equation}
	\xi^{\ast} = \mathop{\arg\min}\limits_{\xi \in \mathbb{R}^m} \Vert r(\tilde{x}) \Vert_2,
    \label{eq:LSprob}
\end{equation}
where $r(\tilde{x})$ is the residual of the projected solution. Instead of solving for $\xi^*$ directly, we approximate the residual function using a first-order Taylor expansion around the converged solution $x^*$:
\begin{equation}
	r(x) \approx r(x^{\ast}) + J^{\ast} (x - x^{\ast}) + O\left(\|x-x^{\ast}\|^2\right),
\end{equation}
where $J^{\ast} = \frac{\partial r}{\partial x}(x^{\ast})$ is the Jacobian matrix at convergence. Neglecting higher-order terms and using $r(x^*) = 0$, we obtain:
\begin{equation}
	\label{eq:linear_residual}
	r(x) \approx J^{\ast} x + b,
\end{equation}
where $b = -J^{\ast} x^{\ast}$ is a constant vector. Substituting Eq.~\eqref{eq:ROM} into Eq.~\eqref{eq:linear_residual}, we obtain:
\begin{equation}
	\label{eq:linear_residual2}
	r(\tilde{x}) \approx r(\bar{x}) + J^{\ast} \Phi \xi,
\end{equation}
where $r(\bar{x})$ is the residual of the mean solution.
The quadratic approximation of the residual distinguishes the present method from those in, e.g., \cite{RN20}, where a linear system is solved to minimize the $J^{\ast}$-norm of error. As will become clear, despite searching the solution in the same reduced-order space, a better solution can be achieved by MMRES, leading to more significant/robust acceleration than those reported in the literature.

Minimizing $\Vert r(\tilde{x}) \Vert_2$ reduces to solving a least-squares (LS) problem:
\begin{equation}
	\label{eq:LS}
	\Psi \xi = -\bar{r},
\end{equation}
where $\Psi = J^{\ast} \Phi$, and according to Equations~\eqref{eq:Phi} and~\eqref{eq:linear_residual}, it is:
\begin{equation}
	\label{eq:Psi}
	\Psi =
	\begin{bmatrix}
		| & | &  & | \\
		\psi_1 & \psi_2 & \cdots & \psi_m \\
		| & | &  & | \\
	\end{bmatrix}
	=
	\begin{bmatrix}
		| & | &  & | \\
		r(x_1)-\bar{r} & r(x_2)-\bar{r} & \cdots & r(x_m)-\bar{r} \\
		| & | &  & | \\
	\end{bmatrix}.
\end{equation}
The solution to Eq.~\eqref{eq:LS} is obtained using the Moore-Penrose inverse:
\begin{equation}
	\label{eq:optimal_xi}
	\xi^{\ast} = -\Psi^{+} \bar{r},
\end{equation}
where $\Psi^{+} = (\Psi^{\rm T} \Psi)^{-1} \Psi^{\rm T}$. Finally, the optimal projected solution is:
\begin{equation}
	\label{eq:optimal soultion}
	\tilde{x}^{\ast} = \bar{x} - \Phi (\Psi^{\rm T} \Psi)^{-1} \Psi^{\rm T} \bar{r}.
\end{equation}
If $\tilde{x}^{\ast}$ provides a good approximation of the convergent solution ($\tilde{x}^{\ast} \approx x^{\ast}$), using it as a new initial guess enhances the solver’s convergence. Since this ROM is based on the mean and minimizes the residual under the linearity assumption in Eq.~\eqref{eq:linear_residual}, we call this method Mean-based Minimal Residual (MMRES). As for the comparison with methods in previous literature, three remarks are given in the Appendix\ref{sec:AppendixB}, where its relation with RRE, AA and quasi-Newton iteration are fully discussed.



\section{Time Complexity Analysis for Linear Systems}
\label{sec:complexity}

In this section, we apply MMRES to accelerate the Jacobi method to solve linear equations and analyze its impact on time complexity. The time complexity of an iterative method is defined in terms of the number of iterations, $N_{\epsilon}$, required to reduce the residual by a factor of $1/\epsilon$. Specifically, we express its scaling behavior with the number of degrees of freedom (DOFs) per dimension, $n_i$, as:
\begin{equation}
    N_{\epsilon} \sim n_i^{\alpha}.
\end{equation}
The exponent $\alpha$ characterizes the time complexity. A related metric is the convergence rate, $R$, defined as:
\begin{equation}
    \label{eq:1R}
    R = \left|\frac{{\rm log}(\epsilon)}{N_{\epsilon}}\right|.
\end{equation}

We first analyze the time complexity of the baseline Jacobi method. Consider the one-dimensional Poisson equation with periodic boundary conditions, discretized using a three-point finite difference scheme to form a linear system $Ax = b$. Without loss of generality, we assume the computational domain is $[0,1]$ with $n_i+1$ uniformly spaced points. Due to periodic boundary conditions, the number of independent DOFs is $n_i$. The Jacobi iteration for this system is given by:
\begin{equation}
    \label{eq:Jacobiiteration}
    x_{k+1} = G_{J}x_k + f,
\end{equation}
where $G_J$ is the Jacobi iteration matrix, defined as $G_J = I - D^{-1}A$, with $D$ being the diagonal of $A$. The asymptotic convergence factor of the Jacobi iteration is determined by the spectral radius of $G_J$:
\begin{equation}
    \phi = \lim_{k \to \infty} \left( \Vert G_J^k \Vert \right)^{1/k} = \rho(G_J),
\end{equation}
where $\rho(G_J)$ denotes the spectral radius and is the amount the residual reduces in one iteration. For the one-dimensional Poisson equation discretized using a three-point difference scheme, the spectral radius is:
\begin{equation}
    \rho(G_J) = 1 - 2\sin^2\left(\frac{\pi}{2n_i}\right) \approx 1 - \frac{{\pi}^2}{2n^2_i}.
\end{equation}
Using Eq.~\eqref{eq:1R}, the number of iterations required to reach a residual reduction of $\epsilon$ is:
\begin{equation}
    N_{\epsilon} \sim \frac{1}{R} \approx \frac{1}{{\rm log}(\rho(G_J))} \approx - \frac{1}{{\pi}^2/2n_i^2} \sim O(n_i^2).
\end{equation}
Thus, the Jacobi method exhibits a time complexity of $O(n_i^2)$.

Next, we analyze the time complexity of the MMRES-accelerated Jacobi method. MMRES modifies the iteration process by searching for an improved solution within a reduced subspace. Specifically, it finds $\tilde{x} \in x_1 + \mathcal{K}_{m}(G_J, \Delta x_1)$ such that $\tilde{r} \perp \mathcal{L}$, where
\begin{equation}
    \mathcal{K}_{m}(G_J, \Delta x_1) = {\rm span}\{\Delta x_1, G_J\Delta x_1, \dots, G_J^{m-1}\Delta x_1\}
\end{equation}
is the Krylov subspace, $\mathcal{L} = A\mathcal{K}_{m}(G_J, \Delta x_1)$, and $\tilde{r} = b - A\tilde{x}$ is the residual. Since this formulation is analogous to GMRES, the convergence properties of MMRES can be estimated using established results from GMRES. In the following, we outline this estimate.

Following~\cite{RN1303}, the residual norm at the $m^{th}$ step of GMRES satisfies:
\begin{equation}
    \label{eq:GMRES_convergnce_estimate}
    \Vert r_m \Vert \leq \kappa(W) \varepsilon^{(m)} \Vert r_0 \Vert,
\end{equation}
where $\kappa(W) = \Vert W \Vert \Vert W^{-1} \Vert$ is the condition number of the eigenvector matrix $W$, and $\varepsilon^{(m)}$ is the polynomial approximation error:
\begin{equation}
    \varepsilon^{(m)} = \min_{p \in P_m, p(0) = 1 } \max_{\lambda_i} \vert p(\lambda_i)\vert
\end{equation}
where $P_m$ is the polynomial space up to $m$-order.
For matrices with clustered eigenvalues, an explicit bound on $\varepsilon^{(m)}$ is given by:
\begin{equation}
    \label{eq: epsilon^m}
    \varepsilon^{(m)} \leq \left[\frac{D}{d}\right]^{\nu}\left[\frac{R}{C}\right]^{m - \nu},
\end{equation}
where $D = \max_{j = 1,\nu; k =\nu +1,n_i}\vert \lambda_j - \lambda_k \vert$, $d = \min_{j=1,\nu}$, $C$ is the center, and $R$ is the radius of the enclosing circle for the eigenvalues.

For the Jacobi iteration matrix $G_J$, the eigenvalues are:
\begin{equation}
    \lambda_j(G_J) = 1 - 2 \sin^2\left(\frac{j\pi}{2(n+1)}\right) = \cos\left(\frac{j\pi}{n+1}\right),\ {\rm for}\ j = 1,2,\dots, n_i.
\end{equation}
Without loss of generality, we assume $n_i$ is even and we have:
\begin{align}
    \label{eq: niu}
    \nu &= n/2 \\
    \label{eq：DdCR}
    D = 2c_1,\ d = c_2, C &= \frac{1}{2}\left(c_1+c_2\right)\ \text{and}~\ R = \frac{1}{2}\left(c_1-c_2\right).
\end{align}
where $c_1 = \cos\left(\frac{\pi}{n_i+1}\right)$ and $c_2 = \cos\left(\frac{n\pi}{2(n_i+1)}\right)$.

Substituting Eq.~\eqref{eq: epsilon^m} and ~\eqref{eq: niu} into Eq.~\eqref{eq:GMRES_convergnce_estimate} and moving $\Vert r_0\Vert$ to the left-hand side of inequality, we will get:
\begin{equation}
    \frac{\Vert r_m \Vert}{\Vert r_0 \Vert} \leq \kappa(W)\varepsilon^{(m)} \leq \kappa(W)\left[\frac{D}{d}\right]^{n/2}\left[\frac{R}{C}\right]^{m - n/2}
\end{equation}

Taking $m = N_{\epsilon}$, to meet the requirement of iteration(residual is reduced by a factor of $1/\epsilon$), the following inequality should be satisfied:
\begin{equation}
	\frac{\Vert \tilde{r} \Vert}{\Vert r_0 \Vert} \leq \kappa(W)\varepsilon^{(N_\epsilon)} \leq \kappa(W)\left[\frac{D}{d}\right]^{n_i/2}\left[\frac{R}{C}\right]^{N_\epsilon - n_i/2} \leq \epsilon.
\end{equation}

Since $G_J$ is normal, its eigenvectors form an orthonormal basis, and thus, its condition number satisfies $\kappa_2(W) = \Vert W \Vert_2 \Vert W^{-1} \Vert_2 = 1$, which leads to:
\begin{equation}
    \label{eq: DdRClesse}
    \left[\frac{D}{d}\right]^{n_i/2}\left[\frac{R}{C}\right]^{N_\epsilon - n_i/2} \leq \epsilon.
\end{equation}
Substituting Eq.~\eqref{eq：DdCR} into Eq.~\eqref{eq: DdRClesse}, moving the terms associated with $N_\epsilon$ and $n_i$ to the both sides of the inequality and taking the logarithm, we will get:
\begin{equation}
	N_\epsilon \log \left(\frac{c_1 - c_2}{c_1 + c_2}\right) \leq \frac{n_i}{2} \left[\log\left(\frac{c_1 - c_2}{c_1+c_2}\right) + \log\left(\frac{c_2}{2c_1}\right)\right] + \log \epsilon \leq  \frac{n_i}{2}\log \left(\frac{c_1 - c_2}{c_1 + c_2}\right) + \log \epsilon
\end{equation}
and considering that:
\begin{equation}
	\log \left(\frac{c_1 - c_2}{c_1 + c_2}\right) = \log \left(1 - \frac{2c_2}{c_1 + c_2}\right) \sim \frac{1}{n_i},
\end{equation}
we finally get:
\begin{equation}
	N_\epsilon \sim \frac{\log \epsilon}{1/n_i}+\frac{n_i}{2} \sim O(n_i).
\end{equation}
This confirms that MMRES reduces the time complexity of the Jacobi iteration from $O(n_i^2)$ to $O(n_i)$.

To validate this analysis, we apply MMRES to the Jacobi method to solve the Poisson equation in one- and two-dimensional cases, where the forcing terms are arbitrarily given and are $f(x) = \sin(\pi x)$ and $f(x,y) = \sin(\pi x)\sin(\pi y)$, respectively. The convergence rate is measured using the inverse of the convergence factor, $1/R$, as defined in Eq.~\eqref{eq:1R}. Figure~\ref{fig:scaling} presents the results, showing that the baseline Jacobi method scales as $1/R \sim n_i^2$, while MMRES scales as $1/R \sim n_i$.  
Following the discussion in \cite{BIN2025113859}, this reduction in time complexity extends to other point-iterative solvers and three-dimensional problems. However, as this work focuses on the practical benefits of the proposed methodology, a more detailed analysis on these fundamental problems is omitted for brevity.


\begin{figure}[!t]
    \centering
    \subfigure {\
        \begin{minipage}[b]{.46\linewidth}
            \centering
            \begin{overpic}[scale=0.85]{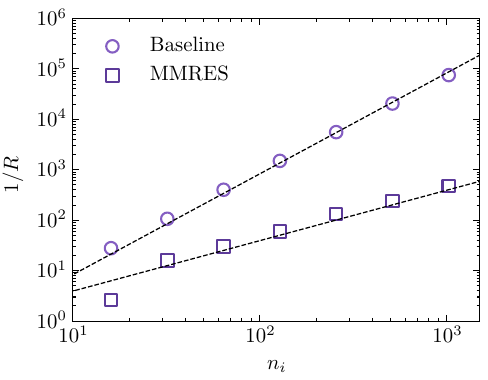}
                \put(0,75){(a)}
            \end{overpic}
        \end{minipage}
    }
    \subfigure {\
        \begin{minipage}[b]{.46\linewidth}
            \centering
            \begin{overpic}[scale=0.85]{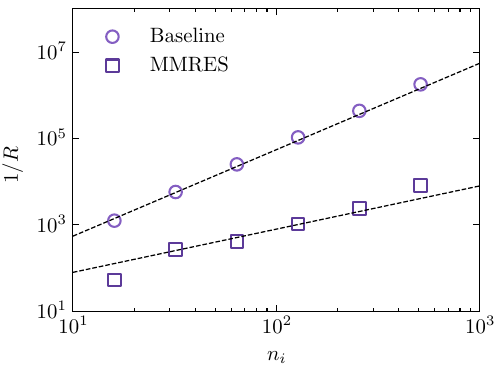}
                \put(0,75){(b)}
            \end{overpic}
        \end{minipage}
    }    
    \caption{Scaling behavior of the baseline Jacobi method and MMRES-accelerated Jacobi iteration for (a) one-dimensional and (b) two-dimensional Poisson equations.}
    \label{fig:scaling}
\end{figure}

\section{Engineering Flow Problems}
\label{sec:cfd}

CFD in practical engineering is inherently complex, often involving intricate geometries, poor-quality meshes, stiff solution matrices, and unexpected interactions between numerical schemes. While data-driven methods for CFD have shown promising results in the literature, their adoption beyond the original developers' groups remains limited. Moreover, several recent studies have highlighted challenges in generalization, with performance degradation observed when these methods are applied outside their training datasets.  

Given the challenges, this section carefully examines the practical value of MMRES in two scenarios. The first focuses on accelerating convergence, which is straightforward and requires no further explanation. The second scenario involves obtaining a steady-state solution, which warrants clarification. For instance, in RANS simulations of flow past a square cylinder, achieving a steady-state solution is challenging due to oscillating vortex shedding. A common workaround is to impose symmetry along the centerline, effectively halving the computational domain. While this often stabilizes the solution, not all flows possess such symmetry, and methods that stabilize the solution are helpful. 
Another scenario where the stabilizing effect of the convergence booster is desirable is when an unstable steady-state solution is sought. Here, an unstable steady-state solution refers to a time-independent solution of the governing equations for which small perturbations grow over time \cite{sree2006control}. Although such solutions are not physically sustained, they serve as base flows in stability and modal analyses, providing critical insight into flow transitions, bifurcations, and the structure of phase space.
In this section, we will demonstrate that MMRES functions as a stabilizer by mitigating large-scale oscillatory modes, thereby facilitating convergence to a steady solution.

Four test cases are considered, encompassing inviscid and turbulent flows, 2D and 3D geometries.
MMRES functions as an accelerator in the first two and as a stabilizer in the last two. Details of the solver and its numerics can be found in Appendix~\ref{sec:AppendixA}. 
Here, we only note that the solver consists of pseudo time stepping and solution to linear equations inside each pseudo time step.
The latter of which is handled using point iterative method here, such as Gauss-Seidel method, and often converges in a couple of iterations.
We note that the purpose of MMRES is to reduce the number of pseudo time steps needed for a steady state solution rather than to reduce the number of Gauss-Seidel iterations needed for the linear solve. 

To reduce memory usage, we store the root-mean-square of the residuals for all flow variables instead of the residuals of each individual variable. This reduces memory requirements by 40\% in three-dimensional cases and 37.5\% in two-dimensional cases. 

In all figures, the legend follows the notation MMRES$(n_s,m)$, where $n_s$ is the number of iteration steps between two snapshots and $m$ is the total number of snapshots. Accordingly, MMRES is applied every $n_s m$ iterations. The original CFD solver without MMRES serves as the baseline for comparison.

\subsection{Subsonic Inviscid Flow over a NACA0012 Airfoil}

The performance of MMRES as an acceleration method is first evaluated for subsonic inviscid flow over a NACA0012 airfoil. The computational grid, shown in Fig.~\ref{fig:inviscid_flows_NACA0012}(a), consists of 7,038 triangular elements with 200 boundary grid points on the airfoil surface. The free-stream Mach number is set to $Ma = 0.63$, and the angle of attack is $\alpha = 2^{\circ}$. The flow is solved using a second-order finite volume method with the ROE scheme and an implicit symmetric Gauss-Seidel pseudo-time marching scheme. The computed pressure contour is shown in Fig.~\ref{fig:inviscid_flows_NACA0012}(b) for reference purposes. 

\begin{figure}[!t]
    \centering
    \subfigure {\
        \begin{minipage}[b]{.46\linewidth}
            \centering
            \begin{overpic}[scale=0.35]{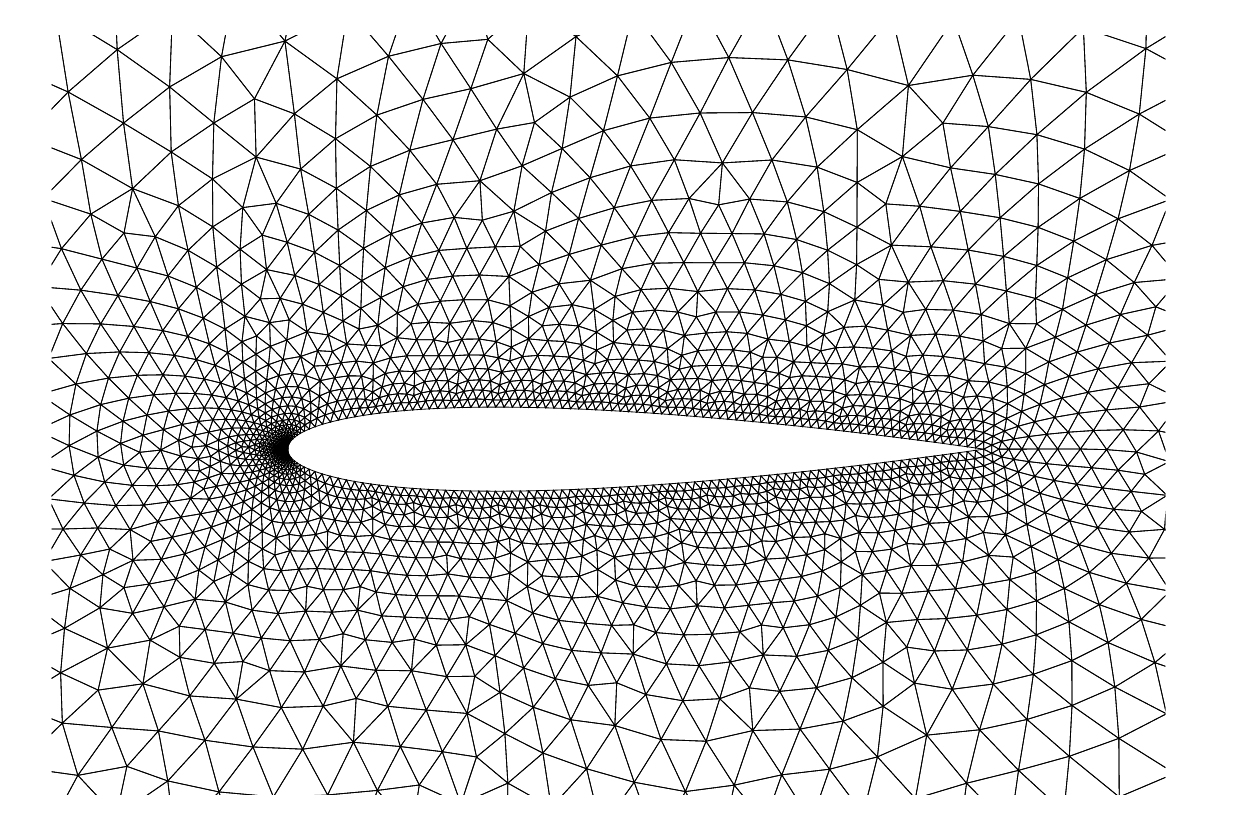}
                \put(-5,65){(a)}
            \end{overpic}
        \end{minipage}
    }
    \subfigure {\
        \begin{minipage}[b]{.46\linewidth}
            \centering
            \begin{overpic}[scale=0.35]{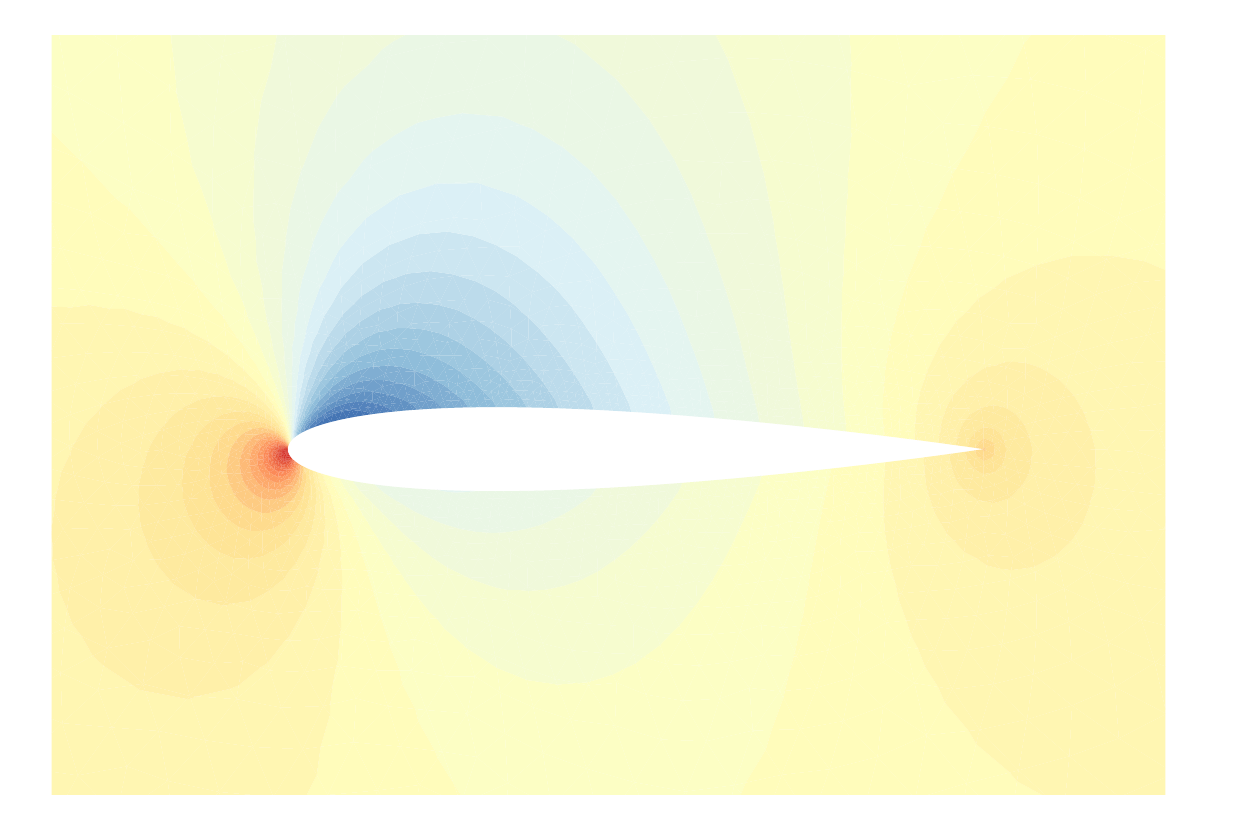}
                \put(-5,65){(b)}
            \end{overpic}
        \end{minipage}
    }    
    \caption{(a) Computational mesh and (b) computed pressure contour for subsonic inviscid flow over a NACA0012 airfoil.}
    \label{fig:inviscid_flows_NACA0012}
\end{figure}

\begin{figure}[!t]
    \centering
    \subfigure {\
        \begin{minipage}[b]{.46\linewidth}
            \centering
            \begin{overpic}[scale=0.85]{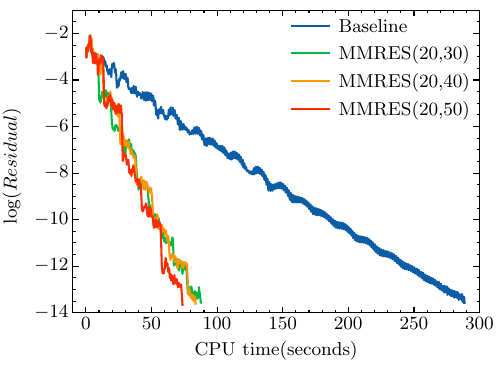}
                \put(0,75){(a)}
            \end{overpic}
        \end{minipage}
    }
    \subfigure {\
        \begin{minipage}[b]{.46\linewidth}
            \centering
            \begin{overpic}[scale=0.85]{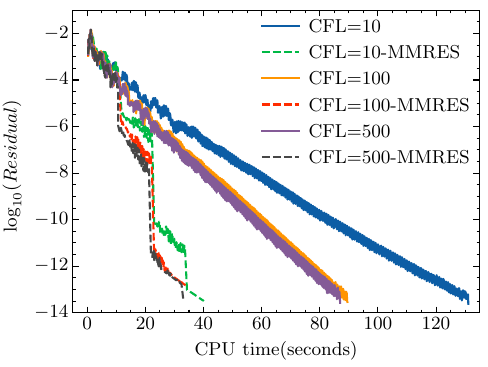}
                \put(0,75){(b)}
            \end{overpic}
        \end{minipage}
    }    
    \caption{(a) Comparison of residual convergence histories between the baseline method and MMRES-accelerated methods with different parameters;
    Here, the CFL number is kept at 2. (b) Comparison of residual convergence histories between the baseline method and MMRES-accelerated methods with different CFL numbers. Here, the MMRES method corresponds to MMRES(20,40). }
    \label{fig:resi_inviscid_NACA0012}
\end{figure}

Figure~\ref{fig:resi_inviscid_NACA0012}(a) compares the convergence histories of MMRES-accelerated simulations with different parameter settings against the baseline method. Here, the CFL number is kept at 2. The baseline solver requires 288.5 seconds to reduce the residual below $10^{-13}$, whereas MMRES-accelerated simulations achieve the same level in just 73.6 seconds, yielding a 3.9× speedup in CPU time. The additional computational cost of MMRES is negligible, averaging 0.6 seconds per acceleration step, equivalent to fewer than 40 baseline iteration steps. When MMRES is applied every 1,000 iterations as is here, the total computational overhead remains under 4\%. 

A sensitivity analysis is performed by varying the number of snapshots while keeping the snapshot interval fixed at $n_s = 20$. The total number of snapshots, $m$, is increased from 30 to 50 in increments of 10. As shown in Fig.~\ref{fig:resi_inviscid_NACA0012}(a), MMRES remains effective and insensitive to the choice of $m$. In all these cases, the CFL number is kept at 2. Further tests are conducted by increasing the CFL number to 10, 100, and 500 to evaluate MMRES performance under different time-marching conditions. In each case, MMRES is applied with 40 snapshots collected over 800 iterations. As shown in Fig.~\ref{fig:resi_inviscid_NACA0012}(b), increasing CFL improves the baseline convergence rate. However, beyond CFL = 100, further increases yield diminishing returns, suggesting a critical CFL threshold beyond which additional increases do not accelerate convergence. Across all CFL numbers, MMRES consistently achieves a 2× to 4× reduction in CPU time compared to the baseline solver. Detailed computational times for different CFL values are presented in Table~\ref{table3} for the ease of comparison.

\begin{table}
    \renewcommand\arraystretch{1.2}
    \begin{center}
        \caption{Comparison of CPU time for different CFL numbers when solving inviscid flow over a NACA0012 airfoil.}
        \begin{tabular}{c|cc}
            \toprule
            Method & CPU time (seconds) & Speedup ratio \\
            \hline
            Baseline (CFL = 2)   & 288.5 & - \\
            MMRES (CFL = 2)   & 73.6 & 3.9 \\
            \hline
            Baseline (CFL = 10)  & 133.1 & - \\
            MMRES (CFL = 10)  & 40.1 & 3.3 \\
            \hline 
            Baseline (CFL = 100)  & 89.5 & - \\
            MMRES (CFL = 100)  & 33.8 & 2.6 \\
            \hline 
            Baseline (CFL = 500)  & 87.0 & - \\
            MMRES (CFL = 500)  & 33.1 & 2.6 \\ 
            \bottomrule
        \end{tabular}
        \label{table3}
    \end{center}
\end{table}

\subsection{RANS of Transonic Flow over the ONERA M6 Wing}

Next, MMRES is applied to transonic turbulent flow over the ONERA M6 wing. The simulation conditions are Mach number $Ma = 0.8395$, Reynolds number $Re = 1.17 \times 10^7$, and angle of attack $\alpha = 3.06^{\circ}$. The computational mesh, shown in Fig.~\ref{fig:transonic_flow_ONERA_M6}(a), consists of approximately $2 \times 10^6$ elements with 33,938 surface cells. The flow is solved using a second-order finite volume method with the AUSM+ scheme, an implicit symmetric Gauss-Seidel pseudo-time marching scheme (CFL = 3), and the Spalart-Allmaras (S-A) turbulence model. The computed pressure contour is shown in Fig.~\ref{fig:transonic_flow_ONERA_M6}(b) for reference.

Two MMRES parameter sets are tested, with the total number of snapshots set to $m = 40$ and $m = 16$, and corresponding snapshot intervals of $n_s = 50$ and $n_s = 125$, respectively. The convergence histories in Fig.~\ref{fig:resi_transonic_flow_ONERA_M6}(a) show that MMRES accelerates convergence by a factor of 2.6. Additionally, the performance remains consistent between $m = 40$ and $m = 16$, confirming that MMRES is effective even with a small number of snapshots. The drag coefficient convergence history, shown in Fig.~\ref{fig:resi_transonic_flow_ONERA_M6}(b), indicates that MMRES efficiently eliminates low-frequency oscillatory error modes that would otherwise require significantly more iterations to dampen.

\begin{figure}[!t]
    \centering
    \subfigure {\
        \begin{minipage}[b]{.46\linewidth}
            \centering
            \begin{overpic}[scale=0.3]{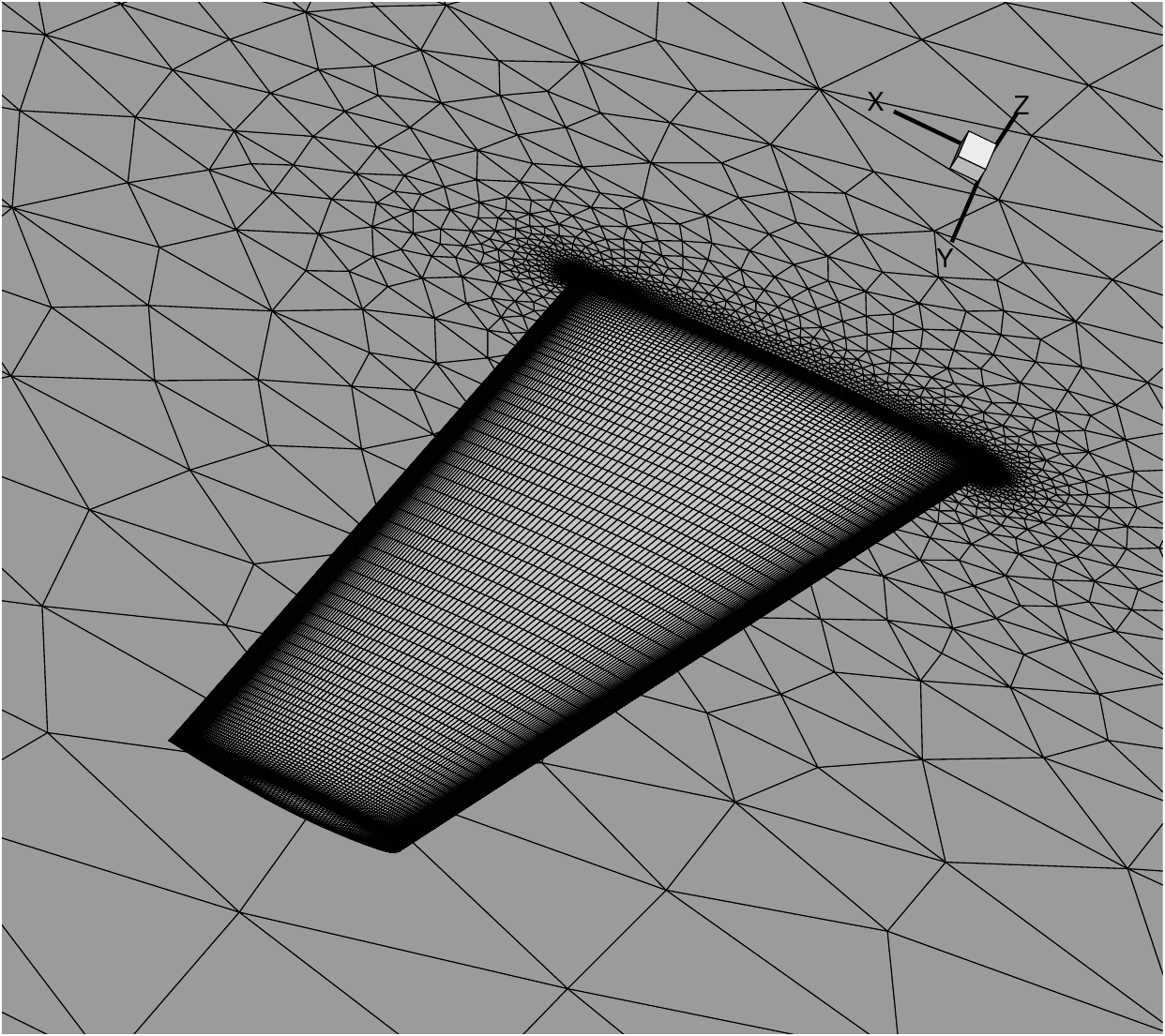}
                \put(-10,90){(a)}
            \end{overpic}
        \end{minipage}
    }
    \subfigure {\
        \begin{minipage}[b]{.46\linewidth}
            \centering
            \begin{overpic}[scale=0.3]{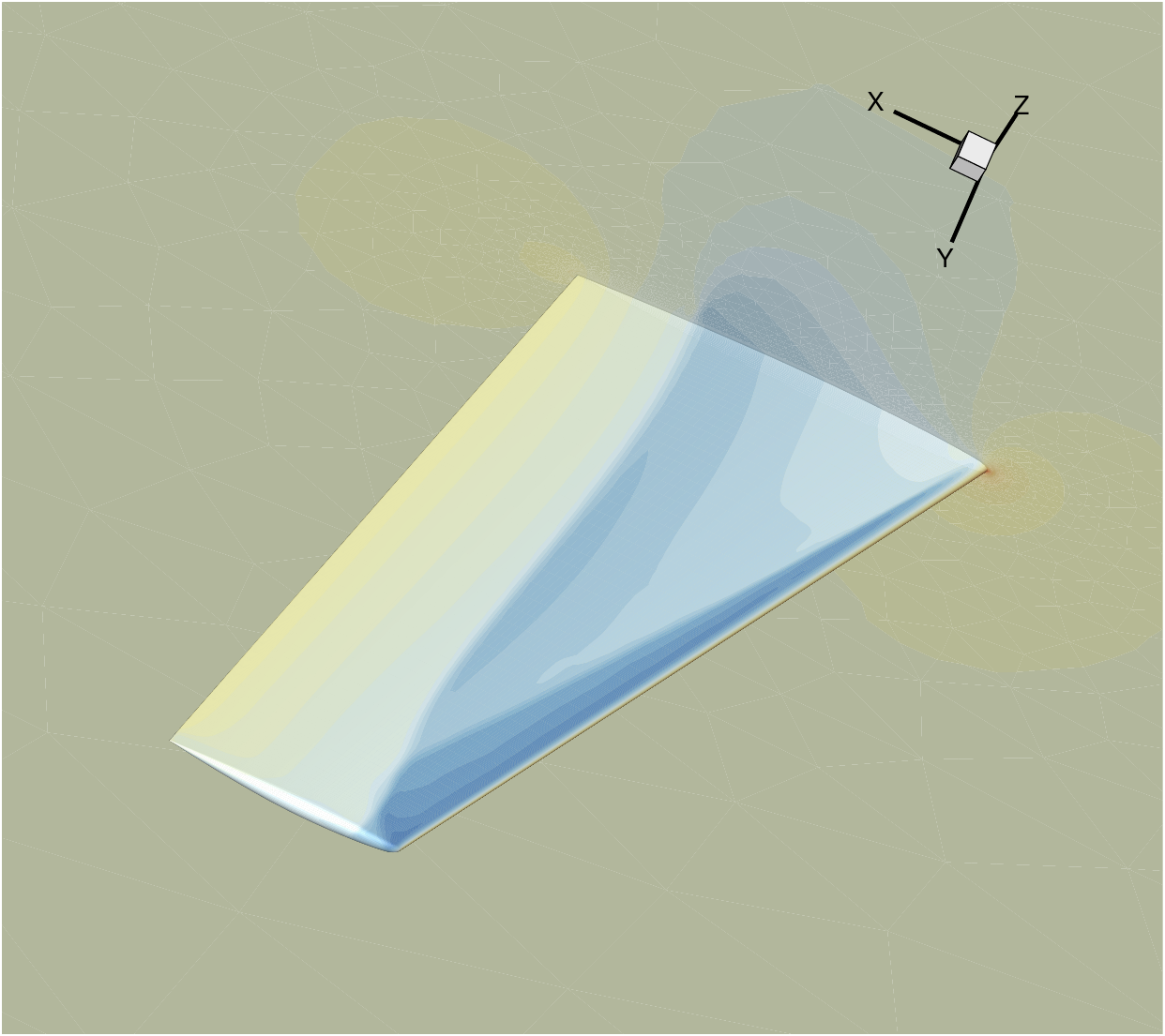}
                \put(-10,90){(b)}
            \end{overpic}
        \end{minipage}
    }    
    \caption{(a) Computational mesh; (b) computed pressure for the transonic turbulent flow over the ONERA M6 wing.}
    \label{fig:transonic_flow_ONERA_M6}
\end{figure}

\begin{figure}[!t]
    \centering
    \subfigure {\
        \begin{minipage}[b]{.46\linewidth}
            \centering
            \begin{overpic}[scale=0.85]{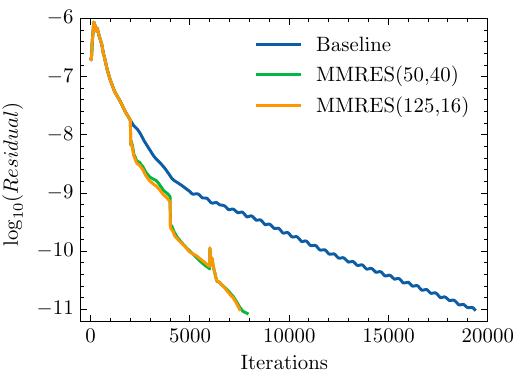}
                \put(0,70){(a)}
            \end{overpic}
        \end{minipage}
    }
    \subfigure {\
        \begin{minipage}[b]{.46\linewidth}
            \centering
            \begin{overpic}[scale=0.85]{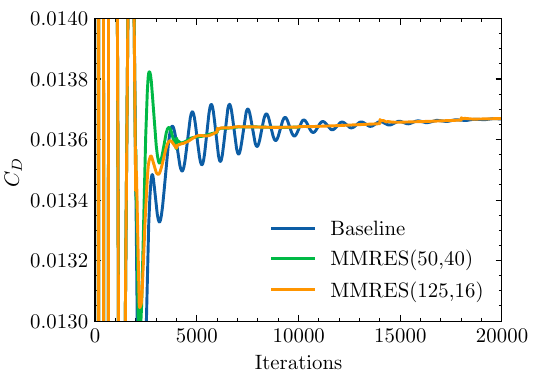}
                \put(0,70){(b)}
            \end{overpic}
        \end{minipage}
    }    
    \caption{(a) Residual convergence histories; (b) Convergence histories of drag coefficient $C_D$.}
    \label{fig:resi_transonic_flow_ONERA_M6}
\end{figure}

\subsection{Flow over a Circular Cylinder}

We now test MMRES as a stabilizer. The first case considers flow over a circular cylinder at two supercritical Reynolds numbers, $Re = 100$ and $Re = 500$. The computational domain consists of 67,613 control volumes, with 300 points along the cylinder surface. A refined mesh is used in the downstream region to accurately capture the separation zone.

We first present results for the $Re=100$ case. The objective is to obtain a fully converged solution, so we employ pseudo-time marching. The baseline method uses the LU-SGS pseudo-time marching scheme with a CFL number of 5. Figure~\ref{fig:circular_cylinder} presents the convergence histories of the residual and the lift coefficient magnitude $\vert C_l \vert$. Despite pseudo-time iteration, strong unsteady effects persist, characterized by periodic oscillations in both the residual and $\vert C_l \vert$. Furthermore, the residual stagnates, and after approximately 10,000 iterations, the flow field enters a limit cycle, characterized by periodic vortex shedding. A snapshot of this non-convergent flow is shown in Fig.~\ref{fig:circular_cylinder_Re100}(a), where the Kármán vortex street is clearly visible. 
MMRES is applied with three parameter settings: $m = 60$ and snapshot intervals of $n_s = 30$, $60$, and $90$. Figure~\ref{fig:circular_cylinder}(a) shows that all three MMRES-enhanced methods successfully reduce the residual by eight additional orders of magnitude, reaching below $10^{-14}$. In terms of $\vert C_l \vert$, MMRES reduces the lift coefficient to essentially zero within 30,000 steps, again demonstrating that the method is insensitive to the sampling window size. The computed vorticity distribution of the MMRES-enhanced method is shown in Fig.~\ref{fig:circular_cylinder_Re100}(b), with the separation zone indicated by dashed lines. The size of the separation bubble agrees well with previous studies \cite{RN35}, verifying the accuracy of the unstable steady-state solution obtained using MMRES.


\begin{figure}[!t]
    \centering
    \subfigure {\
        \begin{minipage}[b]{.46\linewidth}
            \centering
            \begin{overpic}[scale=0.85]{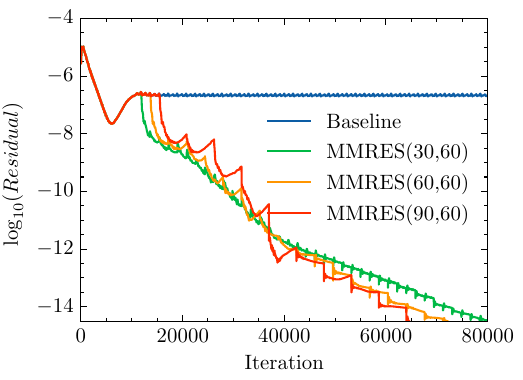}
                \put(0,70){(a)}
            \end{overpic}
        \end{minipage}
    }
    \subfigure {\
        \begin{minipage}[b]{.46\linewidth}
            \centering
            \begin{overpic}[scale=0.85]{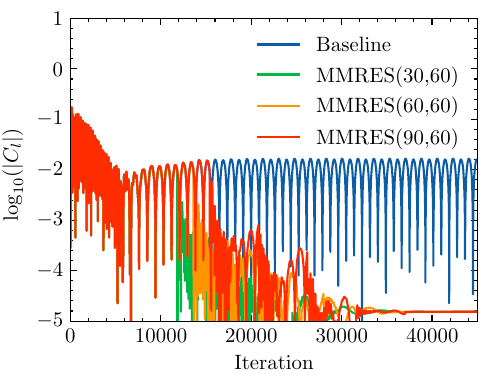}
                \put(0,75){(b)}
            \end{overpic}
        \end{minipage}
    }    
    \caption{(a) Convergence histories of the residual and (b) lift coefficient magnitude $\vert C_l \vert$ for flow over a circular cylinder at $Re = 100$.}
    \label{fig:circular_cylinder}
\end{figure}

\begin{figure}[!t]
    \centering
    \subfigure {\
        \begin{minipage}[b]{.46\linewidth}
            \centering
            \begin{overpic}[scale=0.36]{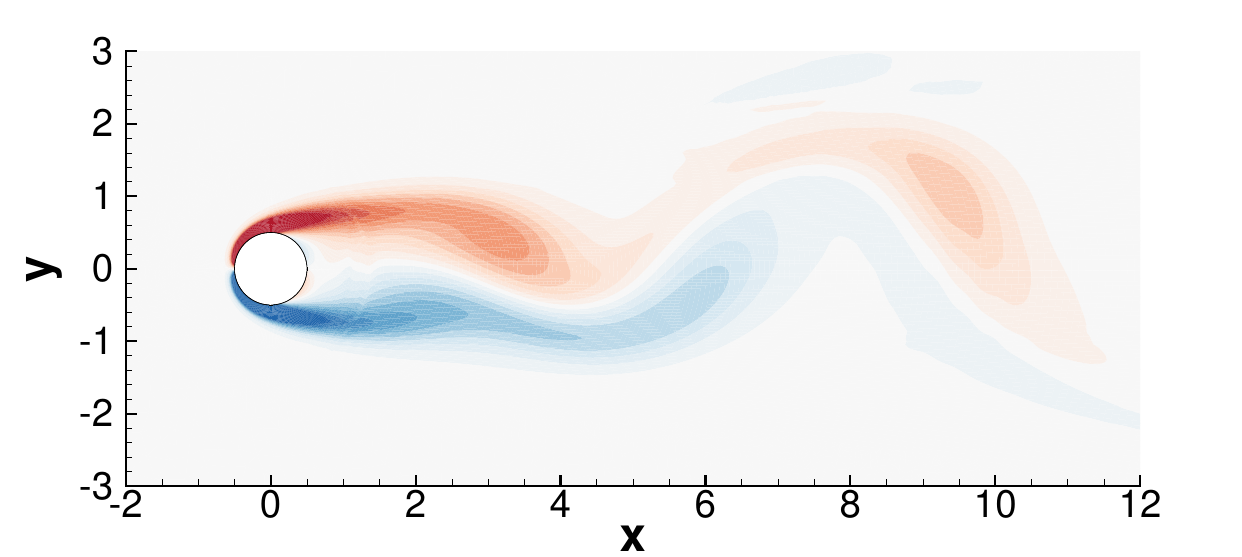}
                \put(0,45){(a)}
            \end{overpic}
        \end{minipage}
    }
    \subfigure {\
        \begin{minipage}[b]{.46\linewidth}
            \centering
            \begin{overpic}[scale=0.36]{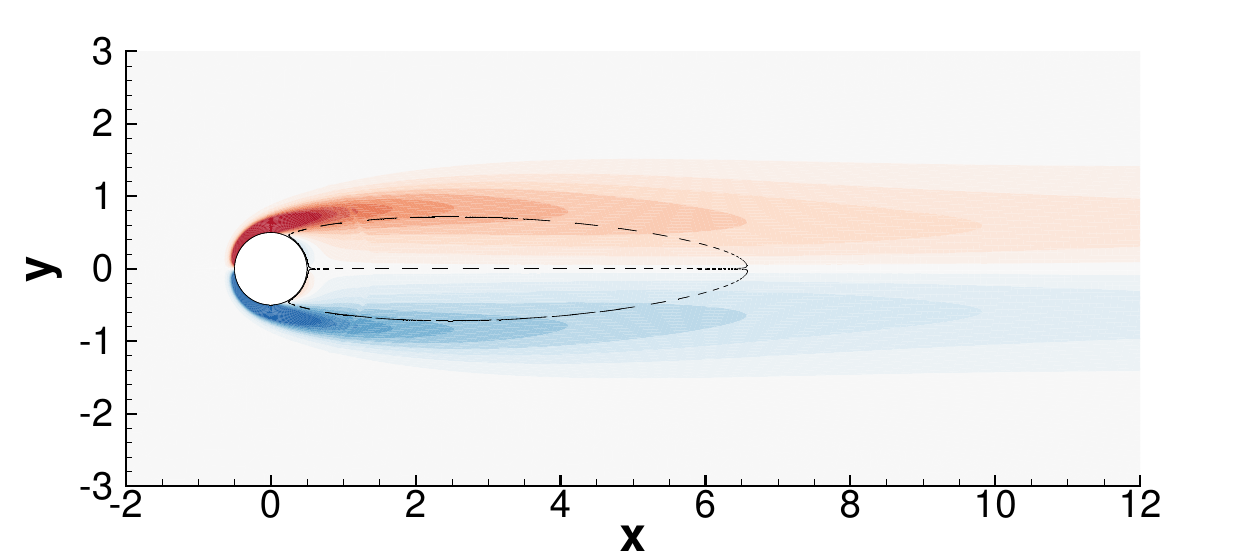}
                \put(0,45){(b)}
            \end{overpic}
        \end{minipage}
    }    
    \caption{Vorticity $(\omega = \partial_{x_1}v_2 - \partial_{x_2}v_1)$ of the flow past a circular cylinder at ${\rm Re} = 100$. (a) Snapshot of the non-convergent, unsteady flow; (b) Steady solution computed with the aid of MMRES. Dashed lines indicate separating streamlines. 
    }
    \label{fig:circular_cylinder_Re100}
\end{figure}

\begin{figure}[!t]
    \centering
    \subfigure {\
        \begin{minipage}[b]{.46\linewidth}
            \centering
            \begin{overpic}[scale=0.36]{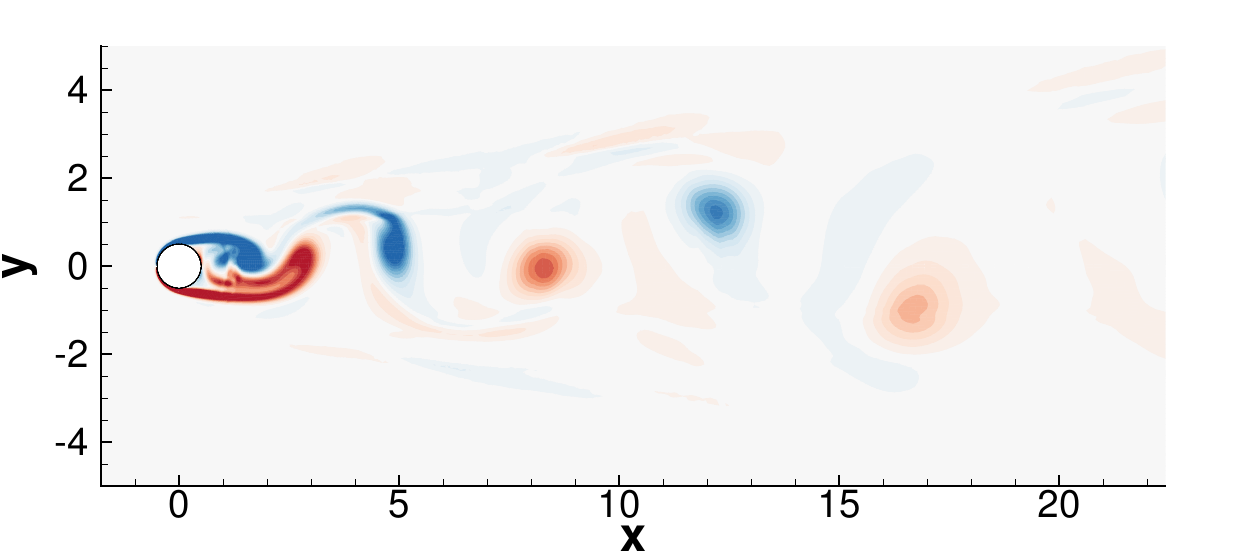}
                \put(0,45){(a)}
            \end{overpic}
        \end{minipage}
    }
    \subfigure {\
        \begin{minipage}[b]{.46\linewidth}
            \centering
            \begin{overpic}[scale=0.36]{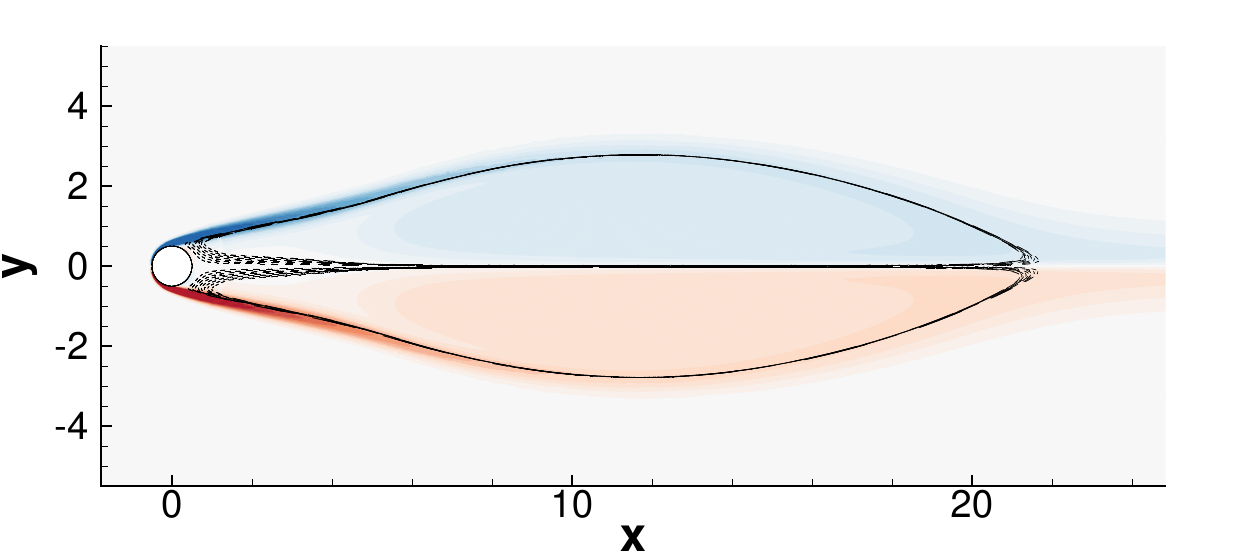}
                \put(0,45){(b)}
            \end{overpic}
        \end{minipage}
    }    
    \caption{Vorticity of the flow past a circular cylinder at ${\rm Re} = 500$. (a) Snapshot of non-convergent, unsteady flow; (b) Steady solution computed with the aid of MMRES. Dashed lines indicate separating streamlines.}
    \label{fig:circular_cylinder_Re500}
\end{figure}

The $Re=500$ case presents a greater challenge in obtaining a steady-state solution \cite{RN178}. 
For this case, the MMRES parameters are set to $m = 70$ and $n_s = 40$, corresponding to a sampling window of 2,800 steps—approximately twice the oscillation period at $Re = 500$. The computed vorticity distributions of both the non-convergent and convergent flows are shown in Fig.~\ref{fig:circular_cylinder_Re500}(a) and (b), respectively. In the MMRES-stabilized solution, the separation bubbles—plotted as dashed lines—are significantly stretched, aligning well with the reference data \cite{RN640}. This again confirms that MMRES effectively stabilizes the solution and accurately captures the steady-state flow characteristics.

\subsubsection{Transonic buffet}

While a symmetric boundary condition can still be used to obtain a steady-state solution for flow past a circular cylinder, this approach is not applicable to more complex flows, such as transonic buffet—a phenomenon of aerodynamic instability that occurs at specific combinations of Mach number and angle of attack. Transonic buffet is characterized by periodic low-frequency shock oscillations, leading to fluctuations in lift and drag. 
Obtaining a steady-state baseline RANS solution is valuable for modal analysis of the flow and subsequent analysis of the flow dynamics.
Here, transonic buffet over a NACA0012 airfoil is simulated at Mach number $Ma = 0.3$, angle of attack $\alpha = 5.5^{\circ}$, and Reynolds number $Re = 3 \times 10^6$. The computational domain consists of 18,520 elements, and the S-A turbulence model is employed for consistency with prior studies \cite{RN954}. The implicit symmetric Gauss-Seidel scheme is used with CFL = 2.

MMRES is applied with parameters $n_s = 50$ and $m = 40$, but only after the periodic limit cycle is fully developed, around the 4,000th iteration step. Figure~\ref{fig:transonic_buffet_NACA0012} compares the residual and lift coefficient $C_l$ histories. The baseline method fails to converge to a steady-state solution, while MMRES successfully suppresses oscillations and stabilizes the flow in approximately 40,000 iterations. 
For reference purposes, Fig.~\ref{fig:transonic buffet flow over NACA0012 airfoil_2}(a) presents the computed pressure contour, while Fig.~\ref{fig:transonic buffet flow over NACA0012 airfoil_2}(b) compares the pressure coefficient distribution on the airfoil surface with results obtained using a flow control method \cite{RN954}. The agreement between the two confirms the accuracy of MMRES in capturing the unstable steady-state solution. These results demonstrate that MMRES is a robust stabilizer for inherently unsteady flows.

\begin{figure}[!t]
	\centering
	\subfigure {\
		\begin{minipage}[b]{.46\linewidth}
			\centering
			\begin{overpic}[scale=0.85]{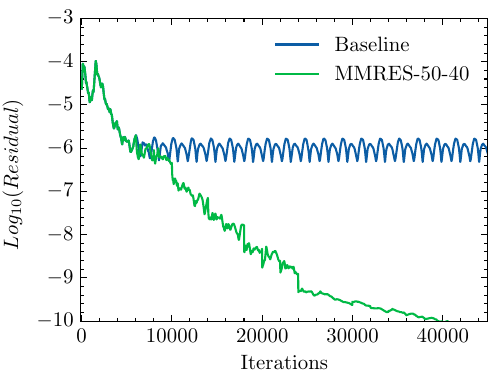}
				\put(0,70){(a)}
			\end{overpic}
		\end{minipage}
	}
	\subfigure {\
		\begin{minipage}[b]{.46\linewidth}
			\centering
			\begin{overpic}[scale=0.85]{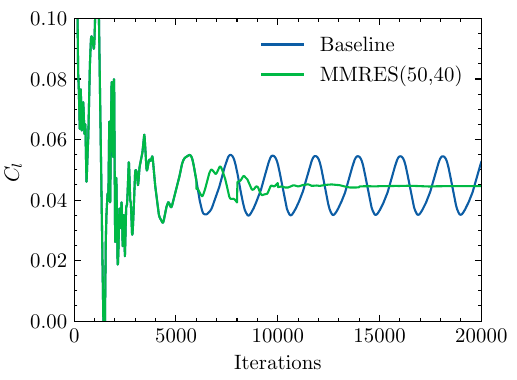}
				\put(0,70){(b)}
			\end{overpic}
		\end{minipage}
	}	
	\caption{(a) Convergence histories of residual; (b) lift coefficient $C_l$ for transonic buffet flow over NACA0012 airfoil.}
	\label{fig:transonic_buffet_NACA0012}
\end{figure}

\begin{figure}[!t]
	\centering
	\subfigure {\
		\begin{minipage}[b]{.46\linewidth}
			\centering
			\begin{overpic}[scale=0.3]{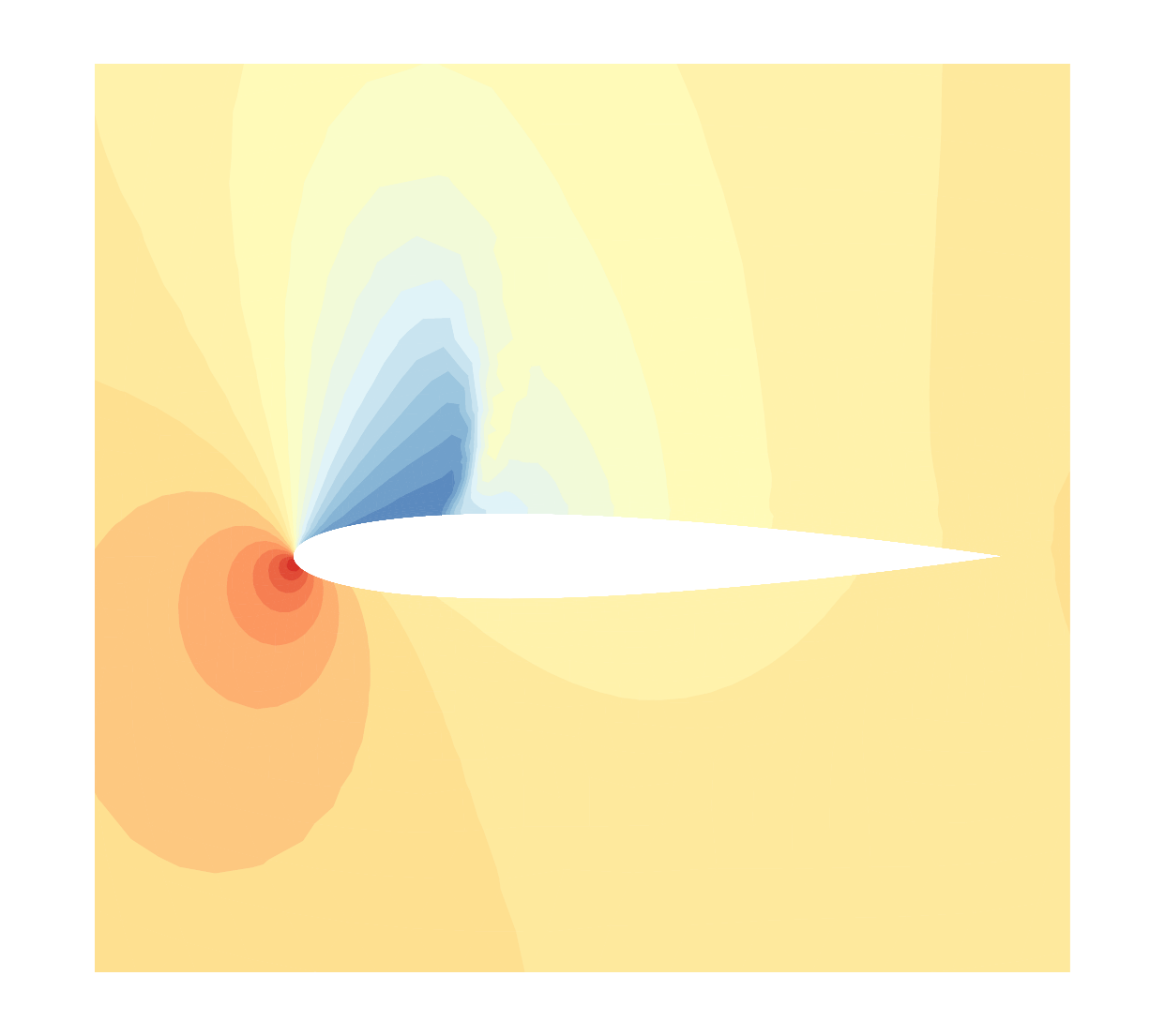}
				\put(0,80){(a)}
			\end{overpic}
		\end{minipage}
	}
	\subfigure {\
		\begin{minipage}[b]{.46\linewidth}
			\centering
			\begin{overpic}[scale=0.85]{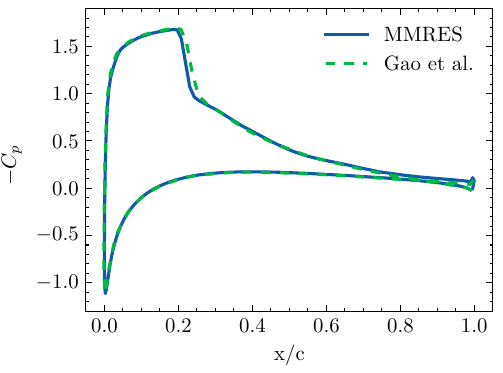}
				\put(0,70){(b)}
			\end{overpic}
		\end{minipage}
	}	
	\caption{(a) Computed pressure contour; (b) pressure coefficient distributions from the flow control method \cite{RN954} and MMRES.}
	\label{fig:transonic buffet flow over NACA0012 airfoil_2}
\end{figure}

\section{Conclusion}
\label{sec:conclusion}

This study introduces the Mean-based Minimal Residual (MMRES) method as a robust and efficient convergence booster for iterative solvers in computational fluid dynamics (CFD). By constructing a mean-based reduced-order model and minimizing the residual norm within a low-dimensional subspace, MMRES significantly accelerates convergence and stabilizes simulations that would otherwise remain unsteady. The method integrates seamlessly into existing solvers with minimal implementation effort and negligible computational overhead.

Theoretical analysis shows that MMRES reduces the time complexity of point-iterative methods from $\mathcal{O}(n^2)$ to $\mathcal{O}(n)$ for point iterative methods and linear problems. Numerical experiments on a range of realistic nonlinear CFD cases demonstrate consistent performance improvements, including $3$--$4\times$ speedup in CPU time for implicit pseudo-time stepping schemes. MMRES also enables convergence to steady-state solutions in flows with persistent unsteadiness, such as vortex shedding and transonic buffet, and facilitates the computation of unstable steady-state solutions that serve as base flows for stability and modal analyses.
Importantly, the method exhibits low sensitivity to its parameters, maintaining robust performance across a range of settings for the number of snapshots and sampling intervals. This insensitivity futher enhances its practicality and ease of use in engineering applications.

In summary, MMRES provides a practical and generalizable framework for accelerating and stabilizing CFD solvers. Future work will focus on extending the method to multi-block grids and parallel computing environments to support large-scale simulations in real-world applications.

\section{Acknowledgments}
This work was supported by the National Natural Science Fund of China (12372290) and Shaanxi Province Department of Science and Technology (2023-ZDLGY-27).

\bibliographystyle{unsrt}  


\bibliography{ref}

\section*{Appendix}
\appendix

\section{Numerical methods}
\label{sec:AppendixA}

This appendix elaborates the CFD code.
In the cartesian coordinate, the integral form of three-dimensional compressible Navier-Stokes equations can be written as:
\begin{equation}
	\label{NS_integral}
	\frac{\partial}{\partial{t}} \int_{\Omega}Ud\Omega+\int_{\partial\Omega}\left(\vec{F}(U)-\vec{G}(U,\vec{\nabla}U)\right)\cdot\vec{n}dS=0
\end{equation}
where $\Omega$ is the control volume; $\partial\Omega$ is the boundary of control volume; and $\vec{n}=[n_1,n_2,n_3]^T $ denotes the unit outward normal vector to the boundary. The $(5\times 1)$ column vector  $U$ of conservative variables, generalized $(5\times3)$ inviscid ﬂux vector $\vec{F}$ and viscous flux vector $\vec{G}$ are given as follows:
\begin{equation}
	\label{flux}
	U=\begin{bmatrix} \rho \\ \rho \vec{v} \\ \rho E \end{bmatrix},\vec{F}=\begin{bmatrix} \rho \vec{v} \\ \rho \vec{v} \bigotimes \vec{v} +p \Bar{\Bar{I}} \\ \rho \vec{v} H\end{bmatrix},\vec{G}=\begin{bmatrix} 0 \\ \Bar{\Bar{\tau}} \\ \Bar{\Bar{\tau}}\cdot\vec{v}+k\vec{\nabla}T \end{bmatrix} 	
\end{equation}
where $\rho$ is the density and $\vec{v}=[v_1,v_2,v_3]^T$ stands for the velocity vector. $E=e+{\vert\vec{v}\vert}^2/2$ is the total energy per mass and $H=E+p/\rho$ is the total enthalpy per unit mass. $\bigotimes$ and $ \vec{\nabla}$ denotes dyadic tensor and hamilton operator respectively. Finally $ \Bar{\Bar{I}}$ is the unit tensor and $\Bar{\Bar{\tau}}$ is the viscous shear stress tensor. With the assumption of linear eddy viscosity, the viscous shear stress can be written in the form:
\begin{equation}
	\label{tau}
	\tau_{ij}=(\mu+\mu_t)\left( \frac{\partial v_i}{\partial x_j}+\frac{\partial v_j}{\partial x_i}-\frac{2}{3}\frac{\partial v_k}{\partial x_k}\delta_{ij}\right) 
\end{equation}
where $\mu$ and  $\mu_t$ is the viscosity coefficient of laminar and turbulent flow. Accordingly, the thermal conductivity coefficient $ k$ can be written as:
\begin{equation}
	\label{k}
	k=\frac{c_p}{\gamma-1}\left(\frac{\mu}{\rm Pr}+\frac{\mu_t}{{\rm Pr}_t}\right) 
\end{equation}
where ${\rm Pr}$ and ${\rm Pr}_t$ is the laminar and turbulence Prandtl number respectively; and $\gamma$ is the ratio of specific heats. For the ideal gas, $\gamma$ is equal to 1.4. According to Sutherland’s law, the laminar flow viscosity coefficient is given by:
\begin{equation}
	\label{Sutherland’s law}
	\mu = \mu_{\rm ref}\frac{T_{\rm ref}+S_0}{T+S_0}\left(\frac{T}{T_{\rm ref}}\right) ^{3/2}
\end{equation}
where $T_{\rm ref}$ and $\mu_{\rm ref}$ are physical constants of reference temperature and viscosity; and $S_0$ is the Sutherland temperature. Their values are $T_{\rm ref} =273.15K, \mu_{\rm ref} = 1.716\times10^{-5}kg/(m\cdot s)$ and $S_0 =110K$, respectively. The equation of state for the ideal gas is:
\begin{equation}
	\label{equation of state}
	p=(\gamma-1)\rho\left(E-\frac{1}{2}{\vert\vec{v}\vert}^2\right).  
\end{equation}

In the cell-centered Finite Volume Method (FVM), the computational domain is divided into non-overlapping control volumes that completely cover the domain. Then the governing equation of the integral form is applied to the control volumes in the computational domain, which results to a large system of ordinary differential equations after spatial discretization. The semi-discrete formulation of the flow equation is expressed as follows:
\begin{equation}
	\label{6}
	\frac{d\bar{U}_i}{dt}=-\frac{1}{\vert\Omega_i\vert}\sum_{f\in n_{\rm f}(i)} |S_{f}|\left(\bar{F}_{f}-\bar{G}_{f}\right)
\end{equation}
where $\bar{U}_i$ denotes the cell-centered value of the control volume $ i $:
\begin{equation}
	\label{7}
	\bar{U}_i=\frac{1}{\vert\Omega_i\vert}\int_{\Omega_i}U(x_1,x_2,x_3)d\Omega.
\end{equation}
$\bar{F}_f$ and $\bar{G}_f$ denotes the face averaged normal inviscid and viscous ﬂux respectively:
\begin{equation}
	\label{face averaged}
	\bar{F}_{f}=\frac{1}{\vert S_{f}\vert}\int_{S_{f}}\vec{F}(U)dS ,\quad  \bar{G}_{f}=\frac{1}{\vert S_{f}\vert}\int_{S_{f}}\vec{G}(U,\vec{\nabla}U)dS.
\end{equation}
$\vert\Omega_i\vert$ is the volume of the control volume; $ n_{f}(i) $ is the set of face neighbor cells of cell $i$ ; $ |S_{f}| $ is the area of $f^{th}$ interface. In numerical approximation, however, they can rarely be computed exactly even if the the cell-averaged solution $\bar{U}_i$ are known. Instead, a Gaussian quadrature formula is employed to compute the face integral(take inviscid flux for example):
\begin{equation}
	\label{Gaussian quadrature}
	\bar{F}_{f}\approx\sum_{j=1}^{q} \omega_j\vec{F}\left(\vec{x}_{f,j}\right)\cdot \vec{n}_{f}
\end{equation}
where $ q $ is the total number of surface Gaussian quadrature points;  $ \omega_j $ is the corresponding weight coefficient; $\vec{n}_{f}$ is the outer normal vector of the $f^{th}$ surface of grid cell $ i $ and $\vec{x}_{f,j}$ is the position vector of Gauss quadrature point. Because variables are approximated by piece-wise polynomials, the solution is discontinuous across cell interfaces. According to the Godunov-type method, the interface normal ﬂux are calculated by the Riemann flux:
\begin{equation}
	\label{Riemann flux}
	\vec{F}\left(\vec{x}_{f,j}\right)\cdot \vec{n}_{f} \approx \hat{F}\left(U^+(\vec{x}_{f,j}),U^-(\vec{x}_{f,j}),\vec{n}_{f}\right)
\end{equation}
where  $U^+$ and $U^-$ are the solutions reconstructed inside and outside the cell $\Omega_i$, respectively. 

For the semi-discrete Eq.~\eqref{6}, to advance the solution along time, the time derivative is approximated by backwards Euler difference and the implicit discretization for flux terms are adopted in this paper. Due to the fact that the final convergent steady state is irrelevant of the physical time, the real time step $\Delta t$ is replaced by pseudo time step $\Delta \tau$: 
\begin{equation}
	\label{8}
	\frac{\bar{U}_i^{k+1}-\bar{U}_i^{k}}{\Delta \tau}+\bar{R}_i^{k+1} = 0
\end{equation}
where the subscript donates the index of pseudo time iteration step; $\bar{R}_i^{k+1} = R_i\left(\bar{U}_i^{k+1}\right)$ represents the residual of $i^{th}$ control volume and is equivalent to the right-hand side of Eq.~\eqref{6} except the minus sign.

To calculate the residual at $k+1^{th}$ pseudo time step from the current state, local linearization has to be taken based on Taylor expansion and high-order terms are neglected:
\begin{equation}
	\label{9}
	\bar{R}_i^{k+1} \approx \bar{R}_i^{k} + \sum_{j\in n_{c}(i)} \frac{\partial \bar{R}_i^k}{\partial \bar{U}_j^k}\left(\bar{U}_{j}^{k+1}-\bar{U}_{j}^{k}\right)
\end{equation}
where $n_{c}(i)$ is the set of cell $i$ and its neighbor cells (i.e. $n_{c}(i) = n_{f}(i)\cup\{i\}$).

With the definitions of 
\begin{equation}
	\left\{  
	\begin{aligned}  
		&J^{k}_{i,j}=\frac{\partial \bar{R}_i^k}{\partial \bar{U}_j^k} 
		\\  
		&J^{k} = 
		\begin{bmatrix}
			J^{k}_{1,1} & j^{k}_{1,2} & \cdots & J^{k}_{1,n}
			\\
			^{k}_{2,1} & J^{k}_{2,2} & \cdots & J^{k}_{2,n}
			\\
			\vdots & \vdots & \ddots & \vdots
			\\
			J^{k}_{n,1} & J^{k}_{n,2} & \cdots & J^{k}_{n,n}
		\end{bmatrix}   
	\end{aligned}  
	\right.
\end{equation}
and 
\begin{equation}
	\bar{U}^{k+1}=\bar{U}^{k} + \Delta \bar{U}^k 
\end{equation}
$\bar{R}_i^{k+1}$ in Eq.~\eqref{8} is substituted by the linearization term in Eq.~\eqref{9}, a implicit high-dimensional linear system will be obtained as follows:
\begin{equation}
	\label{implicit system}
	\left[\frac{I}{\Delta \tau}+J^{k}\right]\Delta \bar{U}^k = -\bar{R}^k
\end{equation}
where $ J^k \in\mathbb{R}^{dn \times dn} $ ($d=4$ for two-dimensional case and $d=5$ for three-dimensional case) is the Jacobian Matrix of the system at $ k$-th pseudo time step, which is generally extremely high-dimensional, sparse and non-sysmmetric in case of unstructured grid. $n$ is the total number of grid elements (control volumes) in FVM. 

Considering that the linear system of equations Eq.~\eqref{implicit system} is unnecessary to be solve exactly, it can be solved approximately by inner iteration algorithm(e.g. Implicit SGS, LU-SGS and GMRES) to get  the increment of iteration $\Delta\bar{U}^k$. Taking the total effects of inexact solution of Eq.~\eqref{implicit system} and classical acceleration techniques into account (local time step and residual smoothing in our solver), the iteration (pseudo time marching) can be written in a compact form if we denote $\bar{U}^k = x_k$ and $\bar{R}^k = r_k$:
\begin{equation}
	x_{k+1} = x_k + B_kr_k
\end{equation}
where $B_k \in \mathbb{R}^{n\times n}$ represents the effect of chosen iteration scheme.

\section{Remarks on MMRES}
\label{sec:AppendixB}
In this section, three remarks on MMRES are given, which compares MMRES with RRE, AA and quasi-Newton iteration:
\begin{enumerate}[label={}, leftmargin=0pt]
    \item \textbf{Remark 1: MMRES minimizes the 2-norm of residual.} 
    
    MMRES solves the LS problem in Eq.~\eqref{eq:LS}, which is equivalent to solving:
    \begin{align*}
        \Psi^{\text{T}}\Psi \xi = -\Psi^{\text{T}}\bar{r},
    \end{align*}
    leading to the solution in Eq.(~\ref{eq:optimal soultion}). This is distinguished from the problem solved in \cite{RN20}:
    \begin{align*}
        \Phi^{\text{T}}\Psi \xi = -\Phi^{\text{T}}\bar{r},
    \end{align*}
    which minimizes the $J^{\ast}$-norm of error.
    \item \textbf{Remark 2: MMRES has a close relation to Reduced Rank Extrapolation(RRE) and Anderson Acceleration(AA).} 

    MMRES has nearly the same structure as RRE and AA except that RRE and AA solve the problem:
    \begin{align*}
        {\Psi^{\prime}}^{\text{T}}{\Psi^{\prime}} \xi = -{\Psi^{\prime}}^{\text{T}}\Delta x_1
    \end{align*}
    where ${\Psi^{\prime}} = [\Delta x_2-\Delta x_1, \Delta x_3-\Delta x_2,\dots,\Delta x_m-\Delta x_{m-1}]$, which minimizes the difference between two iterations $\Delta x_i = x_{i+1}-x_i$.
    \item \textbf{Remark 3: MMRES can be interpreted as a quasi-Newton method.}
    
    The quasi-Newton method can be written as:
    \begin{equation*}
        x_{(k+1)} = x_{(k)} - J(x_{(k)})^{-1}r(x_{(k)})
    \end{equation*}
    where $J(x_{(k)})^{-1}$ is the approximated Jacobian matrix. The expression of MMRES (Eq.~\eqref{eq:optimal soultion}) can be recovered by replacing 
    $x_{(k+1)}$ and $x_{(k)}$ by $\tilde{x}^{\ast}$ and $\bar{x}$ respectively, where the inverse of Jacobian matrix $J(x_{(k)})^{-1}$ is approximated by $\Phi (\Psi^{\rm T} \Psi)^{-1} \Psi^{\rm T}$.
    
\end{enumerate}

More detailed statements about the remarks are given in the following.

\subsection{Remark 1}

Because of the linearity assumption between solution and residual, the proposed method can be formulated under the framework of general projection methods \cite{RN103}. Therefore, we can restate the problem as: 
\begin{equation}
	\label{projection problem}
	{\rm Find}\,  \tilde{x} \in \bar{x}+\mathcal{K}, \,{\rm such \, that} \, r(\tilde{x}) \perp \mathcal{L}
\end{equation}
where $\mathcal{K}$ is the search subspace and $\mathcal{L}$ is the subspace of constraints. Because  $\mathcal{K}$ is spanned by columns of $\Phi$, what we actually want to find is the optimal coefficient $\xi^{\ast}$. The constraint $r(\tilde{x}) \perp \mathcal{L}$ leads to the same LS problem in Eq.~\eqref{eq:LSprob}.

According to the definition in \cite{RN103}, projection method can be classified into \emph{orthogonal} and \emph{oblique} in case of $\mathcal{K} = \mathcal{L}$ and $\mathcal{K} \neq \mathcal{L}$ respectively. In MMRES, it is the special case that $\mathcal{L} = J^{\ast}\mathcal{K}$, which leads to a optimal result that the calculated $\tilde{x}^{\ast}$ minimize the the 2-norm of residual, as stated in Proposition 5.3 of \cite{RN103}. However, it's worth mention that although MMRES falls into the \emph{oblique} projection method, geometrically speaking, MMRES projects the residual of mean solution $\bar{r}$ \emph{orthogonally} onto the subspace $\mathcal{L}$. In other words, $-\Psi \xi^{\ast}$ is the orthogonal projection of residual of mean flow $\bar{r}$ onto $\mathcal{L}$:
\begin{equation}
	-\Psi \xi^{\ast} =\mathcal{P}_{\mathcal{L}} \bar{r}
\end{equation}
where $\mathcal{P}_{\mathcal{L}} = \Psi(\Psi^{\rm T}\Psi)^{-1}\Psi^{\rm T}$ is the orthogonal projector onto the subspace $\mathcal{L}$. As a result, the left residual $r(\tilde{x}^{\ast}) $ is equivalent to $ (I -\mathcal{P}_{\mathcal{L}})\bar{r}$ , which means $r(\tilde{x}^{\ast})$ is perpendicular to all the columns of $\Psi$(the basis matrix of subspace $\mathcal{L}$).

With the perspective of projection in hands, we can compare MMRES with a similar method proposed in \cite{RN20}. In spite of using the same mean-based ROM, the method in \cite{RN20} projected $\bar{r}$ onto $\mathcal{L}$ along the direction orthogonal to the search subspace $\mathcal{K}$. We denote this projection operators as $\mathcal{P}_{\mathcal{L}}^{\mathcal{K}}$. The properties of these two types of projection are briefly summarized in Table \ref{table1}, where the projector, resulted optimal coefficient, left residual and sense of optimality are compared in detail. 

\begin{table}
	\renewcommand\arraystretch{1.2}
	\begin{center}
		\caption{Comparison of two methods of different projection direction}
		\label{table1}
		\begin{tabular}{c|cc}
			\toprule
			Method & \qquad\qquad MMRES \qquad\qquad &  method in \cite{RN20}\\
			\hline
			Projector & $\mathcal{P}_{\mathcal{L}} = \Psi(\Psi^{\rm T}\Psi)^{-1}\Psi^{\rm T}$ & $\mathcal{P}_{\mathcal{L}}^{\mathcal{K}} = \Psi(\Phi^{\rm T}\Psi)^{-1}\Phi^{\rm T}$ \\
			Coefficient $\xi^{\ast}$ & $ -(\Psi^{\rm T}\Psi)^{-1}\Psi^{\rm T}\bar{r}$ & $ -(\Phi^{\rm T}\Psi)^{-1}\Phi^{\rm T}\bar{r}$\\
			Left residual $r(\tilde{x}^{\ast})$ & $[I-\Psi(\Psi^{\rm T}\Psi)^{-1}\Psi^{\rm T}]\bar{r}$ & $[I-\Psi(\Phi^{\rm T}\Psi)^{-1}\Phi^{\rm T}]\bar{r}$\\
			Sense of optimality & minimize $\Vert r(\tilde{x}^{\ast})\Vert_2$ & minimize $\Vert e(\tilde{x}^{\ast})\Vert_{J^{\ast}}$\\ 
			\bottomrule
		\end{tabular}
	\end{center}
	
\end{table}

It's necessary to explain the term `sense of optimality' carefully. As mentioned before, orthogonal projection of $\bar{r}$ results to the optimality in sense of minimizing the 2-norm of the residual. However, oblique projection of $\bar{r}$ along the normal direction of subspace $\mathcal{K}$ will lead to the optimality in sense of minimizing the $J^{*}$-norm of the error, where $J^{*}$-norm is defined as $\Vert x \Vert_{J^{*}} = \Vert x^{\rm T}J^{\ast}x \Vert_2$. The corresponding proposition in \cite{RN103} is repeated here for clarification.
\begin{proposition}
	Assuming $A$ is symmetric positive definite (SPD) and $\mathcal{L} = A\mathcal{K} $. Then a vector $\tilde{x}^{\ast}$ is the result of projecting $\bar{r}$ onto $\mathcal{L}$ along $\mathcal{K}$ if and only if it minimizes the $A$-norm of error over $\bar{x}+\mathcal{K}$, i.e. if and only if
	\begin{equation}
		\nonumber
		\Vert e(\tilde{x}^{\ast})\Vert_{A} = \min_{\tilde{x} \in \bar{x}+\mathcal{K}}\Vert e(\tilde{x})\Vert_{A}
	\end{equation}
	where
	\begin{equation}
		\nonumber
		e(\tilde{x}) = \tilde{x} - x^{\ast}
	\end{equation}
\end{proposition}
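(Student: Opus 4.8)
The plan is to recognize this as the classical optimality characterization of Galerkin-type projection for symmetric positive definite systems (the error-norm counterpart of the result quoted from \cite{RN103} in the main text) and to reproduce its short proof in the notation of Section~\ref{sec:methodology}. I read ``$\tilde{x}^{\ast}$ is the result of projecting $\bar{r}$ onto $\mathcal{L}=A\mathcal{K}$ along $\mathcal{K}$'' as the Petrov--Galerkin condition $r(\tilde{x}^{\ast})\perp\mathcal{K}$, realized by the oblique projector $\mathcal{P}_{\mathcal{L}}^{\mathcal{K}}=\Psi(\Phi^{\rm T}\Psi)^{-1}\Phi^{\rm T}$ of Table~\ref{table1}, equivalently by the normal equation $\Phi^{\rm T}(\bar{r}+\Psi\xi^{\ast})=0$; the first task is simply to spell out this dictionary so the abstract statement and the concrete formula coincide. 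Using the linearization $r(x)=J^{\ast}(x-x^{\ast})=Ae(x)$ together with the fact that $A=J^{\ast}$ being SPD makes $\langle u,v\rangle_{A}:=u^{\rm T}Av$ a genuine inner product with induced norm $\|\cdot\|_{A}$, the key translation is $w^{\rm T}r(\tilde{x})=\langle w,e(\tilde{x})\rangle_{A}$, so that $r(\tilde{x}^{\ast})\perp\mathcal{K}$ is exactly $e(\tilde{x}^{\ast})\perp_{A}\mathcal{K}$.

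The forward implication is then a Pythagorean argument: for any competitor $\tilde{x}\in\bar{x}+\mathcal{K}$ the difference $\tilde{x}-\tilde{x}^{\ast}$ lies in $\mathcal{K}$, hence is $A$-orthogonal to $e(\tilde{x}^{\ast})$, and since $e(\tilde{x})=e(\tilde{x}^{\ast})+(\tilde{x}-\tilde{x}^{\ast})$ we get $\|e(\tilde{x})\|_{A}^{2}=\|e(\tilde{x}^{\ast})\|_{A}^{2}+\|\tilde{x}-\tilde{x}^{\ast}\|_{A}^{2}\ge\|e(\tilde{x}^{\ast})\|_{A}^{2}$, with equality only at $\tilde{x}=\tilde{x}^{\ast}$. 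The converse uses that $\tilde{x}\mapsto\|e(\tilde{x})\|_{A}^{2}$ restricted to the affine set $\bar{x}+\mathcal{K}$ is a strictly convex quadratic (strict because $A\succ 0$): it has a unique minimizer, at which $\tfrac{d}{dt}\|e(\tilde{x}^{\ast}+tw)\|_{A}^{2}\big|_{t=0}=2\langle e(\tilde{x}^{\ast}),w\rangle_{A}=0$ for every $w\in\mathcal{K}$, i.e. $e(\tilde{x}^{\ast})\perp_{A}\mathcal{K}$, i.e. $r(\tilde{x}^{\ast})\perp\mathcal{K}$, which is the projection/normal-equation condition; since each condition pins down $\tilde{x}^{\ast}$ uniquely, the equivalence closes. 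Along the way I would note that $\Phi^{\rm T}\Psi=\Phi^{\rm T}A\Phi$ is nonsingular precisely because $A\succ 0$ and $\Phi$ has full column rank, which is what makes $\mathcal{P}_{\mathcal{L}}^{\mathcal{K}}$ and $\xi^{\ast}$ well defined.

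I do not expect a genuine obstacle here --- this is a textbook argument --- so the ``hard part'' is really careful bookkeeping. One must keep the paper's conventions straight: the residual $r=J^{\ast}(x-x^{\ast})$ has the opposite sign to the usual $b-Ax$, and the ``$J^{\ast}$-norm'' written in Table~\ref{table1} is the square of the true $A$-norm; neither affects the minimizer, but both must be acknowledged. One must also make the geometric phrasing of the proposition (``project $\bar{r}$ onto $\mathcal{L}=A\mathcal{K}$ along $\mathcal{K}$'') rigorously match the algebraic statement of the method of \cite{RN20}, so the result can be cited directly. Finally, it is worth remarking that SPD-ness of $J^{\ast}$ is an idealization --- the Navier--Stokes Jacobian is in general nonsymmetric --- but it is the natural hypothesis under which error-norm optimality (as opposed to the residual-norm optimality that MMRES enjoys unconditionally via $\mathcal{L}=A\mathcal{K}$ as its \emph{constraint} space) is even well posed.
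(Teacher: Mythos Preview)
Your proposal is correct and follows essentially the same approach as the paper: both hinge on the translation $r(\tilde{x}^{\ast})=Ae(\tilde{x}^{\ast})$ together with symmetry of $A$ to convert the Petrov--Galerkin condition $r(\tilde{x}^{\ast})\perp\mathcal{K}$ into $A$-orthogonality of $e(\tilde{x}^{\ast})$ to $\mathcal{K}$. The paper's proof then simply asserts that this $A$-orthogonality ``is exactly the geometrical meaning'' of the minimization, whereas you spell out both implications explicitly (Pythagoras in $\langle\cdot,\cdot\rangle_{A}$ for the forward direction, first-order optimality of a strictly convex quadratic for the converse); your version is therefore a strict expansion of the paper's sketch rather than a different argument.
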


\begin{proof}
	Considering the definition of error and Eq.~\eqref{eq:linear_residual}, residual and error have the following relation:
	\begin{equation}
		Ae(\tilde{x}) = A\tilde{x}  - Ax^{\ast} = r(\tilde{x}).
	\end{equation}
	And because $r(\tilde{x}^{\ast})$ is orthogonal to $\mathcal{K}$ and $J^{\ast} $is SPD, we have:
	\begin{equation}
		r(\tilde{x}^{\ast})^{\rm T}\phi_i =  [Ae(\tilde{x}^{\ast})]^{\rm T}\phi_i = e(\tilde{x}^{\ast})^{\rm T}A\phi_i =  0 , \forall \phi_i \in \mathcal{K} 
	\end{equation}
	which means $e(\tilde{x}^{\ast})$ is $A$-orthogonal to subspace $\mathcal{K}$ and that is exactly the geometrical meaning of $\min_{\tilde{x} \in \bar{x}+\mathcal{K}}\Vert e(\tilde{x})\Vert_{A}$.
	
\end{proof}

It should be noticed that the choice of projector $\mathcal{P}_{\mathcal{L}}^{\mathcal{K}}$ in \cite{RN20} leads to two problems. The first one is that the vector of residual must have the same dimensions as vector of solution, which will increase the load of memory(in case of MMRES-accelerated CFD, only the energy residual vectors are collected). Secondly, as shown in Section \ref{sec:methodology}, $J^{\ast}$ is the Jacobian matrix at convergent point $x^{\ast}$, which is not SPD in most cases. This problem results that the minimization of $\Vert e(\tilde{x}^{\ast})\Vert_{J^{\ast}}$ can't be ensured in theory. Considering these two points, we can come the conclusion that MMRES is better than the method proposed in \cite{RN20}.

\subsection{Remark 2}

For the comprehensive introduction of VEM and AA, which is beyond the scope of this article, please reader refer to reference \cite{RN128,RN170,RN139,RN698,RN459}. After careful derivation, we found that MMRES has close relation with VEM and AA. To analyze their connection, a general framework is necessary and Shanks sequence transformation\cite{RN155} is adopted in our paper.

As shown in \cite{RN155}, given the same sequence of $m$ vectors $[s_1, s_2, ,\dots, s_m] \in \mathbb{R}^{n \times m}$, all these acceleration methods(MPE, RRE and AA) can be written in the form of coupled topological Shanks Transformations:
\begin{equation}
	\label{eq:Shanks}
	\tilde{s} = s_1 - [\Delta s_1,\Delta s_2,\dots,\Delta s_{m-1}](Y^{\rm T}\Delta T)^{-1}Y^{\rm T}t_1
\end{equation}
where $\Delta s_i = s_{i+1} - s_{i}, i =1,2,\dots,m-1$, $Y=[y_1, y_2, \dots, y_{m-1}]\in \mathbb{R}^{n \times (m-1)}$ and $T = [t_1, t_2,\dots, t_{m-1}] \in \mathbb{R}^{n \times (m-1)}$where $t_i, i = 1,2,\dots, m-1$ is the coupled sequence. Given the same sequence $\{ s_i\} = \{x_i\}$, different $Y$ and $T$ leads to the formulation of MPE, RRE, AA and MMRES. Different cases are summarized in Table \ref{table2}, where $\Delta^2s_i = \Delta s_{i+1} - \Delta s_{i}= s_{i+1} - 2s_{i}+s_{i-1}$. As shown in Table \ref{table2}, MMRES can be rewritten in form of Eq.~\eqref{eq:Shanks} if we apply $y_i = \Delta r_i$ and $t_i =r_i$. The detailed derivation is given as follows.

\begin{table}
	\renewcommand\arraystretch{1.2}
	\begin{center}
		\caption{Different $y_i$ and $t_i$ in various methods}
		\label{table2}
		\begin{tabular}{c|cc}
			\toprule
			Method &  $y_i$ &  $t_i$ \\
			\hline
			MPE   & $\Delta x_i$ & $\Delta x_i$ \\
			RRE/AA($\beta = 0$)   & $\Delta^{2} x_i$ & $\Delta x_i$\\
			MMRES & $\Delta r_i$ &  $ r_i$ \\ 
			\bottomrule
		\end{tabular}
	\end{center}
\end{table}

Using the Shanks sequence transform Eq.~\eqref{eq:Shanks} and setting $y_i = \Delta r_i$ and $t_i =r_i$, we have:
\begin{equation}
	\label{eq:MMRES2}
	\tilde{x} = x_1 - \Delta X(\Delta R^{\rm T}\Delta R)^{-1}\Delta R^{\rm T}r_1
\end{equation}
where $\Delta X = [\Delta x_1, \Delta x_2,\dots,\Delta x_{m-1}]$ and $\Delta R = [\Delta r_1, \Delta r_2,\dots,\Delta r_{m-1}]$. As background, recall that if a square matrix $M$ is partitioned as:
\begin{equation}
	M = \left[\begin{array}{cc}
		A & B \\
		C & D \\
	\end{array}\right]
\end{equation}
where $D$ is square and invertible, then $ {\rm det}(M) = {\rm det}(D)\times {\rm det}(M/D)$, where $(M/D)$ is the Schur complement of $D$ in $M$, that is, $(M/D) = A-BD^{-1}C$. Note that $A$ can be a $1 \times 1$ matrix, as was the case above. With this tool in hands, we can define a matrix by setting $A = x_1, B = \Delta X, C = \Delta R^{\rm T}r_1$ and $D = \Delta R^{\rm T}\Delta R$:
\begin{equation}
	M  =\left[\begin{array}{c:ccc}
		x_1 & \Delta x_1 & \cdots & \Delta x_{m-1}\\
		\hdashline
		\langle \Delta r_1,r_1 \rangle & \langle \Delta r_1,\Delta r_1 \rangle & \cdots & \langle \Delta r_1,\Delta r_{m-1} \rangle\\
		\vdots & \vdots & & \vdots\\ 
		\langle \Delta r_{m-1},r_1 \rangle & \langle \Delta r_{m-1},\Delta r_1 \rangle & \cdots & \langle \Delta r_{m-1},\Delta r_{m-1}\rangle\\ 
	\end{array}\right]
\end{equation}
where $\langle \cdot,\cdot \rangle$ denotes the inner-product of two vectors. Then Eq.~\eqref{eq:MMRES2} can be written as the ratio of determinants:
\begin{equation}
	\tilde{x} = \frac{{\rm det}(M)}{{\rm det}(D)} = \frac{\left | \begin{matrix}
			x_1 & \Delta x_1 & \cdots & \Delta x_{m-1} \\
			\langle \Delta r_1,r_1 \rangle & \langle \Delta r_1,\Delta r_1 \rangle & \cdots & \langle \Delta r_1,\Delta r_{m-1} \rangle\\
			\vdots & \vdots &  & \vdots \\
			\langle \Delta r_{m-1},r_1 \rangle &  \langle \Delta r_{m-1},\Delta r_1 \rangle & \cdots & \langle \Delta r_{m-1},\Delta r_{m-1} \rangle \\
		\end{matrix} \right | }
	{\left | \begin{matrix}
			\langle \Delta r_1,\Delta r_1 \rangle & \cdots & \langle \Delta r_1,\Delta r_{m-1} \rangle\\
			\vdots &  & \vdots \\
			\langle \Delta r_{m-1},\Delta r_1 \rangle & \cdots & \langle \Delta r_{m-1},\Delta r_{m-1} \rangle \\
		\end{matrix} \right | }.
\end{equation}
Through determinant identity transformation, we can easily get:
\begin{align}
	\tilde{x} &= \frac{\left | \begin{matrix}
			x_1 & x_2 - x_1 & \cdots & x_m - x_1 \\
			\langle r_2 - r_1,r_1 \rangle & \langle r_2 - r_1,  r_2 - r_1 \rangle & \cdots & \langle r_2 - r_1,r_m - r_1 \rangle\\
			\vdots & \vdots &  & \vdots \\
			\langle  r_m - r_1,r_1 \rangle &  \langle  r_m - r_1,r_2 - r_1 \rangle & \cdots & \langle r_m - r_1,r_m - r_1 \rangle \\
		\end{matrix} \right | }
	{\left | \begin{matrix}
			\langle r_2 - r_1,r_2 - r_1 \rangle & \cdots & \langle  r_2 - r_1,r_m - r_1 \rangle\\
			\vdots &  & \vdots \\
			\langle  r_m - r_1,r_2 - r_1 \rangle & \cdots & \langle r_m - r_1, r_m - r_1 \rangle \\
		\end{matrix} \right | }
	\\
	&= \frac{\left | \begin{matrix}
			\bar{x} & \phi_1 & \cdots & \phi_m \\
			\langle \psi_1,\bar{r} \rangle & \langle \psi_1,\psi_1 \rangle & \cdots & \langle \psi_1,\psi_m \rangle\\
			\vdots & \vdots &  & \vdots \\
			\langle \psi_m,\bar{r} \rangle &  \langle \psi_m,\psi_1 \rangle & \cdots & \langle \psi_m,\psi_m \rangle \\
		\end{matrix} \right | }
	{\left | \begin{matrix}
			\langle \psi_1,\psi_1 \rangle & \cdots & \langle \psi_1,\psi_m \rangle\\
			\vdots &  & \vdots \\
			\langle \psi_m,\psi_1 \rangle & \cdots & \langle \psi_m,\psi_m \rangle \\
		\end{matrix} \right | }
	\label{eq:MMRES_d}
\end{align}
where $\phi_i = x_i - \bar{x}$ and  $\psi_i = r_i - \bar{r}$. Finally, using \emph{Schur determinantal formula}, Eq.~\eqref{eq:MMRES_d} can be written in the form just like Eq.~\eqref{eq:optimal soultion}.

It is also worth mentioning that the choice of $t_i$ and $y_i$ in RRE, AA and MMRES have the same relation $y_i = \Delta t_i$. This property leads to the result that these methods have the same effect of minimizing the corresponding $t$ in sense of 2-norm. That is, MMRES minimizes the residual $r$. Similarly, it has been shown in previous literature that RRE and AA($\beta = 0$, $\beta$ is a parameter in AA)minimize the difference between two snapshots. However, it is hard to say which method is optimal, depending on the specific problem and more research is required.

\subsection{Remark 3}
MMRES has a close relation with quasi-Newton method and we will show in this section that MMRES is actually a simplified quasi-Newton method. Considering the nonlinear system Eq.~\ref{eq:nonlinearsystem}, the standard Newton iteration is given by:
\begin{equation}
	\label{eq:newtoniteration}
	x_{(k+1)} = x_{(k)} - J(x_{(k)})^{-1}r(x_{(k)})
\end{equation}
where the number in suberscript represents the index of Newton iteration step. To distinguish with the index of snapshot used previously, we put the index into round bracket. However, the jacobian matrix $J(U_{(k)})$ is extremely high-dimensional and can't be explicitly expressed in most cases. Therefore, an approximation,  $J_{(k)}$, is used in quasi-Newton iteration. As a subset of quasi-Newton method, the general Broyden's method\cite{RN255} updates the newest Jacobian $J_{(k)}$ from the one at previous Newton iteration step $J_{(k-m)}$ and must satisfies two conditions. The first is $m$ \emph{secant conditions}:
\begin{equation}
	\label{secant conditions}
	J_{(k)}\Delta x_{(i)} = \Delta r_{(i)}, i = 1,2,\dots,m
\end{equation}
and the second condition is \emph{no-change  condition}:
\begin{equation}
	\label{no-change condition}
	J_{(k)}q = J_{(k-m)}q \quad \forall q \quad{\rm such \, that}\quad q\perp \mathcal{K}
\end{equation}
where $\mathcal{K}$ is the subspace spanned by $\Delta x_i, i =1,2,\dots,m$. These conditions lead to the update expression of Jacobian:
\begin{equation}
	J_{(k)}^{-1} = J_{(k-m)}^{-1} + [\mathscr{X}_{(k)}-J_{(k-m)}^{-1}\mathscr{R}_{(k)}](\mathscr{R}_{(k)}^{\rm T}\mathscr{R}_{(k)})^{-1}\mathscr{R}_{(k)}^{\rm T}
\end{equation}
and the iteration:
\begin{equation}
	\begin{aligned}
		x_{(k+1)} &= x_{(k)} - J_{(k)}^{-1}r_{(k)} \\
		& = x_{(k)} - J_{(k-m)}^{-1}r_{(k)} - [\mathscr{X}_{(k)}-J_{(k-m)}^{-1}\mathscr{R}_{(k)}]\xi_{(k)} \\
	\end{aligned}
\end{equation}
where $\mathscr{R}_{(k)} = [\Delta r_{(k-m)},\dots,\Delta r_{(k)}]$, $\mathscr{X}_{(k)} = [\Delta x_{(k-m)},\dots,\Delta x_{(k)}]$ and $\xi_{(k)} = (\mathscr{R}_{(k)}^{\rm T}\mathscr{R}_{(k)})^{-1}\mathscr{R}_{(k)}^{\rm T}r_{(k)}$. 
If we ignore the information from $J_{(k-m)}^{-1}$, that is, \emph{no-change condition}~\eqref{no-change condition} is not used and $J_{(k-m)}^{-1}$ is set as zero matrix, then we will get:
\begin{equation}
	\label{simplified J}
	J_{(k)}^{-1} =  \mathscr{X}_{(k)}(\mathscr{R}_{(k)}^{\rm T}\mathscr{R}_{(k)})^{-1}\mathscr{R}_{(k)}^{\rm T}
\end{equation}
and 
\begin{equation}
	x_{(k+1)} = x_{(k)} - \mathscr{X}_{(k)}(\mathscr{R}_{(k)}^{\rm T}\mathscr{R}_{(k)})^{-1}\mathscr{R}_{(k)}^{\rm T}r_{(k)}.
\end{equation}

Recalling the derivation of MMRES in Section \ref{sec:methodology}, we will find that each MMRES cycle can be considered as a type of quasi-Newton iteration where the Jacobian matrix is approximated around the mean solution $\bar{x}$ by $m$ snapshots $[x_1, x_2, \dots, x_m]$ generated by the original iteration:
\begin{equation}
	\tilde{x}^{\ast} = \bar{x} - \bar{J}^{-1}\bar{r}
\end{equation}
where $\bar{J}^{-1}$ satisfies $m$ \emph{secant conditions}:
\begin{equation}
	\Phi = \bar{J}^{-1}\Psi
\end{equation}
where $\Phi = [x_1 - \bar{x},\dots,x_m - \bar{x}]$ and $\Psi = [r_1 - \bar{r},\dots,r_m - \bar{r}]$ are the same as that defined in Section \ref{sec:methodology}. Therefore, the inversion of Jacobian at $\bar{x}$ is approximated similar to Eq.~\eqref{simplified J}:
\begin{equation}
	\bar{J}^{-1} = \Phi(\Psi^{\rm T}\Psi)^{-1}\Psi^{\rm T}
\end{equation}

Based on what mentioned above, we can come to the conclusion that MMRES is a simplified version of quasi-Newton iteration because \emph{no-change condition} Eq.~\eqref{no-change condition} is not used and the \emph{m secant conditions} Eq.~\eqref{secant conditions} are satisfied by $m$ snapshots $x_i$ from original iteration \eqref{eq:generaliteration} instead of solution $x_{(i)}$ from previous Newton iteration Eq.\eqref{eq:newtoniteration}.

\end{document}